\theoremstyle{plain}
\newtheorem{theorem}{Theorem}
\newtheorem{proposition}[theorem]{Proposition}
\theoremstyle{remark}
\newtheorem{remark}{Remark}
\theoremstyle{definition}
\newtheorem{definition}{Definition}
\newcommand{\openone}{\mathds{1}}
\newcommand{\Rbb}{\mathbb{R}}
\newcommand{\Cbb}{\mathbb{C}}
\newcommand{\abs}[1]{\left\vert#1\right\vert}
\newcommand{\Acal}{\mathcal{A}}
\newcommand{\Scal}{\mathcal{S}}
\newcommand{\Fscr}{\mathscr{F}}
\newcommand{\Hscr}{\mathscr{H}}
 \newcommand{\Mscr}{\mathscr{M}}
\newcommand{\Sscr}{\mathscr{S}}
\newcommand{\Xscr}{\mathscr{X}}
\newcommand{\Yscr}{\mathscr{Y}}
\newcommand{\Ao}{\mathsf{A}}
\newcommand{\Do}{\mathsf{D}}
\newcommand{\Mo}{\mathsf{M}}
\newcommand{\No}{\mathsf{N}}
\newcommand{\Xo}{\mathsf{X}}
\newcommand{\Yo}{\mathsf{Y}}
\newcommand{\Zo}{\mathsf{Z}}
\newcommand{\Fo}{\mathsf{F}}
\newcommand{\blambda}{\boldsymbol{\lambda}}
\newcommand{\bxi}{\boldsymbol{\xi}}
\newcommand{\btheta}{\boldsymbol{\theta}}
\newcommand{\bi}{\boldsymbol}
\newcommand{\rmi}{\mathrm{i}}
\newcommand{\rme}{\mathrm{e}}
\newcommand{\rmd}{\mathrm{d}}
\newcommand{\Tr}[1]{{\rm Tr}\,#1}
\begin{document}

\title{Entropic measurement uncertainty relations for all the infinite components of a spin vector}

\author{Alberto Barchielli$^{1,2,3}$
and Matteo Gregoratti$^{1}$
\\  \\
$^1$ Politecnico di Milano, Dipartimento di Matematica, \\ {}\quad
piazza Leonardo da Vinci 32, 20133 Milano, Italy
\\
$^2$ Istituto Nazionale di Fisica Nucleare (INFN), Sezione di Milano,
\\
$^3$ Istituto Nazionale di Alta Matematica (INDAM-GNAMPA)}

\maketitle

\begin{abstract}
The information-theoretic formulation of quantum measurement uncertainty relations (MURs), based on the notion of relative entropy between measurement probabilities, is extended to the set of all the spin components for a generic spin $s$. For an approximate measurement of a spin vector, which gives approximate joint measurements of the spin components, we define the device information loss as the maximum loss of information per observable occurring in approximating the ideal incompatible components with the joint measurement at hand.
By optimizing on the measuring device, we define  the notion of minimum information loss. By using these notions, we show how to give a significant formulation of state independent MURs in the case of infinitely many target observables. The same construction works as well for finitely many observables, and we study the related MURs for two and three orthogonal spin components. The minimum information loss plays also the role of measure of incompatibility and in this respect it allows us to compare quantitatively the incompatibility of various sets of spin observables, with different number of involved components and different values of $s$.

\end{abstract}

\noindent{\it Keywords\/}: Measurement Uncertainty Relations; positive operator valued measures; spin $s$; information loss; relative entropy.

\section{Introduction}\label{sec:intro}
In the last twenty years the idea of quantum uncertainty relations has been deeply developed and formalized by introducing different related notions. Measurement uncertainty relations (MURs) for joint measurements quantify to which extent one can approximate a set of measurements of incompatible observables by means of a single joint measurement
\cite{Oza04,Jost+19,BusH08,BusLW14,BLW14b,DamSW15,WernF19,ColesBTW17,BullB18,HeiMZ16,BLPY,RS19,DSABH19}. On the other side, MURs of noise/disturbance type quantify the total uncertainty generated by an approximate measurement of a first observable disturbing the measurement of a second one \cite{BusHOW,CF15,AbbB16,WatN17,BLW14b,WernF19,ColesBTW17,BLPY}. Finally, one speaks of preparation uncertainty relations (PURs) when some lower bound is given on the ``spreads'' of the distributions of some observables measured in the same state \cite{MaaU88,AAHB16,DamSW15,WernF19,ColesBTW17,WehW10,RMM17,HeiMZ16,BLPY,Hol11,GMSS18,KettG19}.  An important point in MURs for joint measurements and PURs is to arrive to formulate them for more than two observables \cite{Jost+19,RMM17,DamSW15,AAHB16,ColesBTW17,WehW10,KettG19,GMSS18}. Various approaches have been proposed to quantify the ``errors'' involved in uncertainty relations, such as variances \cite{Hol11,GMSS18}, distances for probability measures \cite{BLW14b,BusLW14,DamSW15,BullB18,BLPY,RS19}, entropies \cite{MaaU88,ColesBTW17,WehW10}, conditional entropies \cite{BusHOW,CF15,AbbB16}\ldots

In this work, our aim is to develop entropic MURs for all the infinite components of a spin $s$ in the case of an approximate measurement of the full spin vector.   The idea of formulating MURs for all the components of a generic spin $s$ was introduced in \cite{DamSW15}: the measurement of a spin vector is seen as an approximate joint measurement of its infinite components and the aim is to have a quantitative bound on the accuracy with which all these observables can be jointly approximated by such a device.  In \cite{DamSW15}  the approximation error is quantified by Wasserstein distances between target and approximating distributions.
Our approach instead is to see a measurement approximation as a loss of information and to quantify it by the use of the relative entropy \cite{BGT17,BGT18,BarG18}. In information theory, the relative entropy is the notion which allows to quantify the loss of information due to the use of an approximate probability distribution instead of the true distribution. This quantification is independent of a dilation of the measurement units and of a reordering of the possible values. In this context it is possible to arrive to MURs for any set of observables and to quantify their amount of incompatibility.

In \cite{BGT18} we succeeded in formulating state independent MURs for any set of $n$ general observables taking a finite number of possible values. The lower bound appearing in these MURs was named
\emph{entropic incompatibility degree}, and it was shown to play the role of an entropy-based measure of incompatibility.
The generalization to position and momentum was given in \cite{BGT17}. However, the formulation given in these two articles does not extend to infinitely many observables. In \cite{BarG18} we treated the case of all the infinite components of a spin 1/2 system, by an approach based on a mean on the directions. However, this approach cannot be extended to sets of observables for which a natural mean does not exist, and, in any case,
it is very difficult to apply it to higher spins.

In this article we show how to quantify the ``inaccuracy'' in an approximate measurement of the full spin vector, for any value of $s$, by introducing the notion of \emph{device information loss} (Sect.\ \ref{sec:dil}). Then, by optimizing on the measuring apparatus, we define the \emph{minimum information loss} (Sect.\ \ref{sec:mil}), by which the entropic MURs for a spin vector can be expressed, in a state independent form (Sect.\ \ref{sec:MUR}). A key point in the formulation of the MURs is the characterization of the class of approximate joint measurements of all the components of the spin vector (Sect.\ \ref{sec:Msf}). The main difference between the present approach and the one introduced in \cite{BGT18} is that now our focus is on the worst loss of information per observable, while previously it was on the total loss of information.

An important point is that the construction we propose for the spin case allows to formulate MURs also for finite and infinite sets of target observables on the same footing, always in a way that ensures independence from the measurement units, as invariant information theoretical quantities are involved. As a byproduct, this approach will produce also a ``normalized quantity of incompatibility'' (the minimum information loss) for different choices of the target observables; this index  can be used to compare sets of different numbers of observables from the point of view of incompatibility. So, after the construction of MURs for all the spin components in a measurement of the full spin vector, we study also the case of an approximate joint measurement of only 2 or 3 orthogonal spin components and show how the minimum information loss allows the quantitative comparison of the various cases (different numbers of components, different values of $s$). As
already stressed in  \cite{DamSW15}, a joint measurement of three orthogonal components is not equivalent to a joint measurement of all the components, in arbitrary directions, and only the case of infinite components respect the rotation symmetry of an angular momentum. So, it is meaningful to enlighten the differences between the case of the spin components in all directions and the case of orthogonal components.

\paragraph{Scheme of the article.}
In Section \ref{sec:joint_M} we present the approximate joint measurements of all the spin components that we are going to analyze.
These are based on approximate measurements of a spin vector, that is generalized observables on the sphere (Sect.\ \ref{sec:Msf}):
given a \emph{positive operator valued measure} (POVM) on the sphere, we process it into an approximate joint measurement of all the spin components by a projection and discretization procedure of its output (Sect.\ \ref{sec:post-p}).
After a general analysis of the rotational covariant approximate measurements of a spin $s$, more explicit results are given for small spins in Section \ref{sec:smalls}.
In Section \ref{sec:newind} we introduce the \emph{minimum information loss} associated to any approximate measurement of a spin vector.
Such a quantity is the lower bound in the state independent MURs for all the spin components, formulated in Remarks \ref{MURsfirst} and \ref{rem:genMUR}.
We also show that the information loss is minimized in the family of rotational covariant POVMs on the sphere.
In Section \ref{il+nv} we show the connections between our entropic quantity and the incompatibility measures based on generalized noisy versions of the target observables.
The numerical values of the minimum information loss are computed in Sect.\ \ref{sec:ind12} for $s=1/2$, in  Sect.\ \ref{sec:ind1} for $s=1$ and in Sect.\ \ref{sec:ind32} for $s=3/2$. In Section \ref{sec:ind12}  we present  also  a state dependent form of MURs in the special case $s=1/2$.
The MURs for two and three orthogonal components and the corresponding bounds for these cases are introduced in Section \ref{sec:2+3ort}. We show also that the minimum information loss has the role of figure of merit to quantify the incompatibility.
The ordering from the least incompatible set to the more incompatible one is given in Section \ref{sec:order}, for different number of spin components (including the case of infinite components) and different spin values $s$.
Section \ref{sec:concl} presents conclusions and outlooks.

\section{Approximate  joint  measurements  of all  spin  components}\label{sec:joint_M}

In this section we introduce the general notations we shall use, our target observables (the set of all spin components) and the class of their approximating joint measurements.

We fix a Cartesian system $x,\,y,\,z$ determined by the orthogonal unit vectors $\boldsymbol{i},\, \boldsymbol{j},\,\boldsymbol{k}$. Let
$S_x\equiv S_1$, $ S_y\equiv S_2$, $ S_z\equiv S_3$ be an irreducible representation of the commutation relations $[S_x,S_y]=\rmi S_z$ (and cyclic relations) in the \emph{Hilbert space} $\Hscr=\Cbb^{2s+1}$,  so that $S_x^2+S_y^2+S_z^2=s(s+1)\openone$, \ $s=1/2,\,1,\,3/2,\ldots$. The corresponding \emph{state space} (the space of all the statistical operators on $\Hscr$) will be denoted by $\Sscr_s$.  In particular, in some discussions, we shall need the  \emph{maximally mixed state}, given by
\begin{equation}\label{mms0}
\rho_{0}=\frac\openone{2s+1}.
\end{equation}

\subsection{Target observables} \label{sec:jmeas}

We denote by $\Xo\equiv \Xo_1$, $ \Yo\equiv \Xo_2$, $\Zo\equiv \Xo_3$ the projection valued measures associated with the self-adjoint operators $S_x$, $S_y$, $S_z$ (respectively) and by $\Xscr$ the set of possible eigenvalues $m$:
\begin{equation}\label{Xscr}
m\in\Xscr:=\{ -s,-s+1,\ldots, s-1,s\}.
\end{equation}
More in general, for a direction $\boldsymbol{n}$ ($\boldsymbol{n}\in \Rbb^3$, $\abs{\boldsymbol{n}}=1$),  we denote by $\Ao_{\boldsymbol{n}}(m)$  the eigen-projections of the spin component in the direction $\boldsymbol{n}$: \ $\boldsymbol{n}\cdot \boldsymbol{S}= \sum_{m\in \Xscr}m \Ao_{\boldsymbol{n}}(m)$. \
As usual we shall identify $\boldsymbol{n}\cdot \boldsymbol{S}$ and $\Ao_{\bi n}$ by calling both them ``spin component''.

\paragraph{Target observables.} The set of observables which we are going to approximate by joint measurements (the \emph{reference or target observables}) consists of all the spin components (the full spin vector):
\begin{equation}\label{def:Ainfty}
\Acal_\infty:= \left\{\Ao_{\bi n}: \bi n\in \Rbb^3, \; \abs {\bi n}=1\right\}.
\end{equation}

Let us introduce now the usual polar angles $\theta ,\, \phi$  in the fixed reference system and denote by $\bi n(\theta,\phi)$ the unit vector in the direction determined by the polar angles $\theta$ and $ \phi$:
\begin{equation}\label{nTP}
\theta\in [0,\pi], \quad \phi\in [0,2\pi), \qquad \bi n(\theta,\phi)=
\begin{pmatrix} \sin \theta \cos \phi \cr \sin \theta \sin \phi \cr \cos \theta\end{pmatrix}.
\end{equation}

In the following we shall need  the rotation operator
\begin{equation}\label{Sphi}
V(\theta,\phi):=\exp \left\{-\rmi \theta S_\phi\right\},\qquad S_\phi:=S_y\cos \phi-S_x\sin \phi=\rme^{-\rmi\phi S_z}S_y\rme^{\rmi\phi S_z},
\end{equation}
corresponding to a counterclockwise rotation of an angle $\theta$ around the unit vector $\bi n\big(\pi/2, \phi+\pi/2\big)$, see \ref{app:symm}.
Such a rotation  brings the $\bi k $ axis to the $\bi n(\theta,\phi)$ one, so that
\begin{equation}\label{UV}
V(\theta, \phi)S_z V(\theta, \phi)^\dagger=\bi n(\theta,\phi)\cdot \bi S,
\end{equation}
\begin{equation}\label{AV}
V(\theta, \phi)\Zo(m) V(\theta, \phi)^\dagger=\Ao_{\bi n(\theta,\phi)}(m),\qquad m\in\Xscr.
\end{equation}
Finally, the spin components enjoy the covariance property
\begin{equation}\label{covA}
U(R)\Ao_{\bi n}(m)U(R)^\dagger=\Ao_{R\bi n}(m),
\end{equation}
where $U(R)$ is the (projective) representation of $SO_3$ introduced in \ref{app:symm}.

\subsection{Approximate joint measurements}\label{sec:Msf}

We are interested in a measurement of a spin vector, which can be only an approximate measurement otherwise it would be a joint measurement of its components which are all incompatible.
Then an approximate measurement of a spin vector will be seen as an approximate joint measurements of its infinite components.
In some sense, this is even an equivalence if one follows the idea of \cite[Sect.\ 4.1]{DamSW15} that a joint measurement of all components of a vector is a positive operator value measure (POVM) whose output is a vector. We shall come back on this point in Remark \ref{rem:+n-n} and in Section \ref{sec:concl}. For a presentation of POVMs, called also \emph{resolutions of the identity}, see \cite[Sects.\ 4.6, 9.3]{BLPY} and   \cite[Sect.\ 2.2]{Hol11}. We shall denote by $\Mscr(\Yscr)$ the set of all the POVMs with value space $\Yscr$; for instance, we have $\Ao_{\bi n}\in \Mscr(\Xscr)$. The distribution of an observable $\Ao$ in a state $\rho$ will be denoted by $\Ao^\rho$.

The first step is to introduce the set of the approximate joint measurements of the spin vector. As formally the length of a spin is constant, we normalize it to 1 and we consider POVMs on the unit sphere $\mathbb S_2$ in $\Rbb^3$,
\begin{equation}\label{2sph}
\mathbb S_2=\left\{\bxi\in\mathbb{R}^3, \ |\bxi|=1\right\}.
\end{equation}
We denote by $\tilde\Fscr(\mathbb S_2)$ the set of all the POVM on $\mathbb S_2$.

The second step will be to approximate the target observables $\Ao_{\bi n}$ with compatible observables $\Mo_{\bi n}$ that share the same output space $\Xscr$ as $\Ao_{\bi n}$; this will be done in Section \ref{sec:post-p} by processing the output of a POVM on the sphere.

On the physical ground (\cite[Chapt.\ 4]{Hol11}, \cite[Sect.\ 4.4]{DamSW15}), an essential physical property of a measurement of an angular momentum vector is its covariance under the rotation group. Moreover, also when any POVM on the sphere is considered to model a possible measurement of an angular momentum, even if it is not rotational covariant,
one could expect that covariance emerges naturally from any reasonable optimality requirement; it happens in \cite{DamSW15} and the present paper does not make an exception.
Of course, the special properties of rotational covariant POVMs on $\mathbb S_2$ will be the basis of some of our results.
So, here we introduce the covariant POVMs on the sphere and give their properties.

\begin{remark}\label{rem:cov} We denote by $\Fscr(\mathbb S_2)$ the set of all the rotation covariant POVMs on $\mathbb S_2$. The covariance of a POVM  $\Fo\in \Fscr(\mathbb S_2)$ means that, for any Borel subset of the sphere $B\subset \mathbb S_2$ and any rotation $R\in SO(3)$, we have $U(R)\Fo(B)U(R)^\dagger=\Fo(RB)$, where the representation $U(R)$ is introduced in \ref{app:symm}.
\end{remark}

The structure of the POVMs in $\Fscr(\mathbb S_2)$ has been completely characterized in \cite[Sect.\ 4.10]{Hol11}, \cite[p.\ 24]{DamSW15}; any covariant POVM on $\mathbb S_2$ can be expressed as
\begin{equation}\label{mixt}\begin{split}
\Fo_{\blambda}(\rmd \theta \rmd \phi)&=\sum_{\ell=-s}^{+s}\lambda_{\ell}\Fo_\ell(\rmd \theta \rmd \phi),\qquad   \lambda_{\ell}\geq 0,\quad  \sum_{\ell=-s}^{+s}\lambda_{\ell}=1, \quad  \blambda=\{\lambda_\ell\}_{\ell\in \Xscr},
\\
\Fo_\ell(\rmd \theta \rmd \phi)&=
\left(2s+1\right) \Ao_{\bi n(\theta,\phi)}(\ell)\, \frac{\sin\theta\rmd\theta\rmd \phi }{4\pi}.\end{split}
\end{equation}
In particular, the normalization of the measure $\Fo_{\blambda}$ for any choice of the $\lambda$'s implies the normalization of the measures $\Fo_\ell$, which means
\begin{equation}\label{normF}
\int_{\theta\in[0,\pi]}\int_{\phi\in[0,2\pi)}\Fo_\ell(\rmd \theta \rmd \phi)=\openone, \qquad \forall \ell\in \Xscr.
\end{equation}
Let us note that the choice of the $z$-axis is arbitrary.

\begin{remark}[Uniform distribution]\label{uniform}
\begin{enumerate}
\item When
$\lambda_\ell=\lambda^0_\ell\equiv 1/(2s+1)$, $\forall \ell$, \eqref{AV} and \eqref{mixt} imply  that $\Fo_{\blambda^0}(\rmd \theta \rmd \phi)$ is the uniform distribution on the sphere: $\Fo_{\blambda^0}(\rmd \theta \rmd \phi)=\openone\,\frac{\sin\theta}{4\pi}\,\rmd\theta\rmd \phi $.

\item Similarly, for any choice of the parameters $\lambda_m$ we get the uniform distribution on the maximally mixed state \eqref{mms0}:
$\Fo^{\rho_{0}}_{\blambda}(\rmd \theta \rmd \phi)=\frac{\sin\theta }{4\pi}\,\rmd\theta\rmd \phi$.
\end{enumerate}
\end{remark}

\subsubsection{Post-processing.}\label{sec:post-p}

By a natural post-processing procedure, we are now able to construct the compatible observables $\Mo_{\bi n}$ on $\Xscr$, approximating the spin components $\Ao_{\bi n}$. Let $\bxi$ be the result obtained from a measurement on the system of $\Fo\in \tilde\Fscr(\mathbb S_2)$. Being $\bxi$ the observed value, for every direction $\bi n$ we want a value for the ideal spin component $\bi n\cdot \bi S$, obtained by a suitable discretization of $ \bi n \cdot\bxi $. This discretization could be based on different criteria, such as angles of the same amplitude, or  projections on $\bi n$ of the same length. In order to have a sufficiently large class of approximate measurements, we do not ask for such a restrictions; we ask only to have symmetry with respect to positive and negative values, so that we can identify $\bi n\cdot \bi S$ with $-\bi n\cdot \bi S$ up to a change of sign in the output value $m$.

Let us consider a set of angles dividing the interval $[0,\pi]$ into $2s+1$ pieces, symmetrically placed with respect to $\pi/2$:
\begin{equation}\label{thetas}
\btheta= \{\theta_0, \theta_1,\ldots,\theta_{2s+1}\}, \qquad 0=\theta_0<\theta_1<\cdots<\theta_{2s+1}=\pi, \qquad \theta_{2s+1-k}=\pi-\theta_k.
\end{equation}
Let $\bxi$ be the result of the measurement $\Fo$ and $\bi n$ be a generic direction forming an angle $\alpha $ with $\bxi$. If we find $\alpha\in [\theta_{s-m},\theta_{s-m+1})$ for $m=s, \ldots, -s+1$, or $\alpha\in [\theta_{2s},\pi]$ for $m=-s$, we attribute the value $m\in\Xscr$ to the spin component in direction $\bi n$.

In other terms, let $C_{\bi n}(m)$, $m\in \Xscr$, be the $2s+1$ parts of the sphere obtained by using this discretization procedure around $\bi n$; by construction we have
\begin{equation}\label{RCn}
RC_{\bi n}(m)=C_{R\bi n}(m), \qquad \forall R\in SO(3),
\end{equation}
\begin{equation}\label{C+-n}
C_{-\bi n}(m)=C_{\bi n}(-m).
\end{equation}
For any choice of a finite number of directions $\bi n_1,\ldots, \bi n_k$, the approximate joint measurement of the spin components in that directions is represented by
\begin{equation}\label{gen_marg}
\Mo_{\Fo,[\bi n_1,\bi n_2,\ldots,\bi n_k]}(m_1,m_2,\ldots,m_k)=\Fo\left(\bigcap_{i=1}^kC_{\bi n_i}(m_i)\right).
\end{equation}
This expression defines a POVM belonging to $\Mscr(\Xscr^k)$.

\begin{remark}\label{rem:S2+comp}
By the construction we have followed, the  POVMs \eqref{gen_marg}  enjoy many properties; the most relevant properties are the following ones.
\begin{enumerate}
\item When $k$ and $\bi n_1,\ldots,\bi n_k$ vary, the POVMs \eqref{gen_marg} are all compatible, because they are obtained by classical post-processing from a unique measure $\Fo$.
\item By the fact that we have a measure on the space of the directions (the set $\mathbb S_2$) and that the post-processing is described by the intersections in \eqref{gen_marg}, the introduced POVMs are invariant under any permutation of the couples $(\bi n_1,m_1),\ldots , (\bi n_k,m_k)$.
\item Again by the structure \eqref{gen_marg}, the introduced POVMs vanish any time the corresponding intersection among the sets $C_{\bi n_i}(m_i)$ is void.
\item  Equation \eqref{C+-n} implies also the symmetry property
\begin{equation}\label{--}
\Mo_{\Fo,[-\bi n_1,\bi n_2,\ldots,\bi n_k]}(m_1,m_2,\ldots,m_k)=\Mo_{\Fo,[\bi n_1,\bi n_2,\ldots,\bi n_k]}(-m_1,m_2,\ldots,m_k).
\end{equation}
\end{enumerate}
\end{remark}

The set of all these compatible POVMs implicitly defines a measure $\Mo_{\Fo}$ on $\Xscr^{\mathbb S_2}$ for all the spin components; then, the measures \eqref{gen_marg} are $k$-dimensional marginals of $\Mo_{\Fo}$. We denote by  $\Mscr_\infty$ the class of POVMs we get by this procedure: $\Fo\in \tilde\Fscr(\mathbb S_2)$ followed by the post-processing described above.

\begin{remark}\label{rem:+n-n}
Note that, just because of properties (ii)--(iv), $\Mscr_\infty$ is not the class of all the POVM's on $\Xscr^{\mathbb S_2}$ (the class of all the approximate joint measurements of all the spin components). Indeed, this larger class contains also POVMs that do not even enjoy  the natural consistency property
\begin{equation}\label{pm12}
\Mo_{[\bi n,-\bi n]}(m_1,m_2)=0, \qquad \forall  m_2 \neq -m_1
\end{equation}
(recall that $\Ao_{\bi n}(m)=\Ao_{-\bi n}(-m)$). For measures in $\Mscr_\infty$ this property follows from point (iii) in Remark \ref{rem:S2+comp}; indeed, by \eqref{C+-n} we have $C_{\bi n}(m_1)\cap C_{-\bi n}(m_2)= C_{\bi n}(m_1)\cap C_{\bi n}(-m_2)=\emptyset$ if $m_2\neq -m_1$.

Exactly for this reason, here we follow \cite{DamSW15} in starting from measures on the sphere, and we study only POVMs belonging to $\Mscr_\infty$.
\end{remark}

Inside $\Mscr_\infty$ we consider the subclass $\Mscr(\Acal_\infty)$ consisting of all the POVMs $\Mo_{\blambda}\equiv\Mo_{\Fo_{\blambda}}$ obtained by starting from the covariant POVMs $\Fo_{\blambda}$ \eqref{mixt};
we consider such POVMs $\Mo_{\blambda}\in \Mscr(\Acal_\infty)$ as the physically sensible approximate joint measurements of all the spin components $\Acal_\infty$; as a matter of fact, we will prove that they allow to minimize the information lost in the approximation.

\begin{remark}\label{rem:cov+}
By  the covariance of $\Fo_{\blambda}$, the POVMs in $\Mscr(\Acal_\infty)$ enjoy the symmetry property
\begin{equation}\label{symm_marg}
U(R)\Mo_{\blambda,[\bi n_1,\ldots,\bi n_k]}(m_1,\ldots,m_k)U(R)^\dagger =\Mo_{\blambda,[R\bi n_1,\ldots,R\bi n_k]}(m_1,\ldots,m_k).
\end{equation}
\end{remark}

\begin{remark}\label{free_p}
The measure $\Mo_{\blambda}$ depends on $2s+\lfloor s\rfloor $ free parameters: $2s$ parameters from the $\lambda$'s and $\lfloor s\rfloor $ from the angles $\btheta$; $\lfloor s\rfloor $ is the integer part of $s$.
\end{remark}

\subsubsection{The structure of the covariant approximating spin components.}
We study now the structure of the covariant POVMs in $\Mscr(\Acal_\infty)$.
The univariate marginal $\Mo_{\blambda,[\bi n]}$ represents the admissible approximation of $\Ao_{\bi n}$ and its expression turns out to be
\begin{eqnarray}\label{Mn}
\Mo_{\blambda,[\bi n(\theta,\phi)]}(m)=\Fo_{\blambda}\big(C_{\bi n(\theta,\phi)}(m)\big)=V(\theta,\phi)\Mo_{\blambda,[\bi k]}(m)V(\theta,\phi)^\dagger,
\\
\Mo_{\blambda,[\bi k]}(m)=\int_{\theta\in[\theta_{s-m},\,\theta_{s-m+1})}\int_{\phi\in[0,{2\pi}]}\Fo_{\blambda}(\rmd \theta \rmd \phi).
\label{Mk}\end{eqnarray}
The compatible univariate POVMs $\Mo_{\blambda,[\bi n]}$ will be central in our formulation of the MURs and we shall call them ``approximate spin components''.

\begin{remark}
From \eqref{mixt} we see that $\Fo_{\blambda}(\rmd \theta \rmd \phi)$ is a mixture of the POVMs $\Fo_{\ell}(\rmd \theta \rmd \phi)$; similarly, each  $\Mo_{\blambda,[\bi k]}$ is a mixture, given by
\begin{eqnarray}\label{discr_s1}
\Mo_{\blambda,[\bi k]}(m)=\sum_{\ell=-s}^{+s}\blambda_{\ell}\Mo_{\ell,[\bi k]}(m),
\\
\Mo_{\ell,[\bi k]}(m)=\left(2s+1\right) \int_{\theta_{s-m}}^{\theta_{s-m+1}}\rmd \theta\, \frac {\sin\theta}{4\pi}\int_0^{2\pi}\rmd \phi \,\Ao_{\bi n(\theta,\phi)}(\ell). \label{discr_s2}
\end{eqnarray}
In the same way, we have
\begin{equation}\label{discr_s3}
\Mo_{\blambda,[\bi n]}(m)=\sum_{\ell=-s}^{+s}\lambda_{\ell}\Mo_{\ell,[\bi n]}(m), \qquad \Mo_{\ell,[\bi n(\theta,\phi)]}(m)=V(\theta,\phi)\Mo_{\ell,[\bi k]}(m)V(\theta,\phi)^\dagger.
\end{equation}
\end{remark}

In order to study the MURs for spin observables (Sect.\ \ref{sec:newind}), we need a more explicit form for $\Mo_{\blambda,[\bi n]}(m)$, for which the following probabilities are needed.
\begin{definition}[$q$-coefficients]\label{def:q-coeff}
We define
\begin{equation}\label{def:q}
q_{\bi\theta}(m|\ell,h):=\Mo_{\ell,[\bi k]}^{\rho_{h}}(m)=\Tr\left\{\rho_h \Mo_{\ell,[\bi k]}(m)\right\}, \qquad  \rho_{h}:=\Zo(h),
\end{equation}
which is the probability of getting the result $m$ in a measurement of $\Mo_{\ell,[\bi k]}$ when the system is in the eigen-state $\rho_{h}$ of $S_z$. The vector $\btheta$ is the set of the discretization angles \eqref{thetas}, defining $\Mo_{\ell,[\bi k]}$ by \eqref{discr_s2}.
\end{definition}

As stated by the following theorem, the $q$-coefficients involve the \emph{Wigner small-$d$-matrix} \cite[Sect.\ 3.6]{BieL81}, defined by
\begin{equation}\label{Wmatrix}
d^{(s)}_{\ell,h}(\theta):= {}_z\langle \ell|\rme^{-\rmi \theta S_y}|h\rangle_z,\qquad \ell, h\in \Xscr, \qquad \theta\in[0,\pi],
\end{equation}
where $|m\rangle_z$, $m\in\Xscr$, is the normalized eigen-vector of $S_z$ of eigen-value $m$.

\begin{theorem}\label{prop:q}
Each admissible approximate measurement of $\bi n\cdot \bi S$ \eqref{discr_s1}  is diagonal in the basis of the eigen-vectors of $\bi n\cdot \bi S$; indeed, the approximate spin components \eqref{discr_s3} have the form
\begin{equation}\label{margq2}
\Mo_{\ell,[\bi n]}(m)=\sum_{h=-s}^s q_{\bi\theta}(m|\ell,h)\Ao_{\bi n}(h), \qquad \Mo_{\blambda,[\bi n]}(m)=\sum_{\ell,h=-s}^s q_{\btheta}(m|\ell,h)\lambda_{\ell}\Ao_{\bi n}(h),
\end{equation}
where the $q$-coefficients \eqref{def:q} appear. Moreover, these coefficients turn out to be given by
\begin{equation}\label{q+d^2}
q_{\btheta}(m|\ell,h)=\left(s+\frac12\right) \int_{\theta_{s-m}}^{\theta_{s-m+1}}\rmd \theta\,\sin\theta \abs{d^{(s)}_{\ell,h}(\theta)}^2,
\end{equation}
where $d^{(s)}_{\ell,h}(\theta)$ is the Wigner small-$d$-matrix defined in \eqref{Wmatrix}.

Finally, the following properties hold: \ $\forall \,m,\,\ell, \,h\in\Xscr$,
\begin{equation}\label{qpos}
 q_{\btheta}(m|\ell,h)>0, \qquad q_{\btheta}(m|\ell,h)=q_{\btheta}(-m|\ell,-h),
\end{equation}
\begin{equation}\label{qsymm}
 q_{\btheta}(m|\ell,h)=q_{\btheta}(m|h,\ell)=q_{\btheta}(m|-\ell,-h),
\end{equation}
\begin{equation}\label{sumrule2}
\sum_{\ell=-s}^sq_{\btheta}(m|\ell,h)= \sum_{h=-s}^sq_{\btheta}(m|\ell,h)= \left(s+\frac 12\right)\left(\cos\theta_{s-m}-\cos\theta_{s-m+1}\right).
\end{equation}
\end{theorem}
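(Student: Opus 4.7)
The plan is to start from the explicit expression \eqref{discr_s2} for $\Mo_{\ell,[\bi k]}(m)$ and compute its matrix elements in the $|h\rangle_z$ basis. Recalling from \eqref{Sphi} that $V(\theta,\phi)=\rme^{-\rmi\phi S_z}\rme^{-\rmi\theta S_y}\rme^{\rmi\phi S_z}$, I would use \eqref{AV} to write
\begin{equation*}
\Ao_{\bi n(\theta,\phi)}(\ell)=V(\theta,\phi)|\ell\rangle_z{}_z\langle\ell|V(\theta,\phi)^\dagger,
\end{equation*}
and then obtain ${}_z\langle h|V(\theta,\phi)|\ell\rangle_z=\rme^{\rmi\phi(\ell-h)}d^{(s)}_{h,\ell}(\theta)$ by inserting $\rme^{\pm\rmi\phi S_z}$ and using definition \eqref{Wmatrix}. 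The matrix element of $\Ao_{\bi n(\theta,\phi)}(\ell)$ between $|h\rangle_z$ and $|h'\rangle_z$ then carries a factor $\rme^{\rmi\phi(h'-h)}$, which upon integration in $\phi$ over $[0,2\pi)$ yields $2\pi\,\delta_{h,h'}$. This kills the off-diagonal terms in the $z$-basis and gives the desired diagonality as well as the formula \eqref{q+d^2} for $q_{\btheta}(m|\ell,h)$ after performing the remaining integral in $\theta$.

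Once $\Mo_{\ell,[\bi k]}(m)=\sum_{h}q_{\btheta}(m|\ell,h)\Zo(h)$ is established, the general formula \eqref{margq2} is obtained by conjugating with $V(\theta,\phi)$ thanks to \eqref{discr_s3} and \eqref{AV}, namely
\begin{equation*}
\Mo_{\ell,[\bi n]}(m)=V(\theta,\phi)\Mo_{\ell,[\bi k]}(m)V(\theta,\phi)^\dagger=\sum_{h}q_{\btheta}(m|\ell,h)\Ao_{\bi n}(h),
\end{equation*}
and summing over $\ell$ weighted by $\lambda_\ell$ gives the statement for $\Mo_{\blambda,[\bi n]}(m)$. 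The identification $q_{\btheta}(m|\ell,h)=\Mo_{\ell,[\bi k]}^{\rho_h}(m)$ in Definition \ref{def:q-coeff} is then automatic, since $\rho_h=\Zo(h)$ and $\Zo(h)\Zo(h')=\delta_{h,h'}\Zo(h)$.

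For the listed properties, I would rely on the standard symmetries of the Wigner small-$d$-matrix. Strict positivity in \eqref{qpos} comes from $\sin\theta>0$ on $(0,\pi)$ together with the fact that $d^{(s)}_{\ell,h}(\theta)$, being a polynomial in $\sin(\theta/2),\cos(\theta/2)$, cannot vanish on any subinterval of positive length (and $[\theta_{s-m},\theta_{s-m+1})$ has positive length by \eqref{thetas}). The identity $q_{\btheta}(m|\ell,h)=q_{\btheta}(-m|\ell,-h)$ follows from the relation $d^{(s)}_{\ell,-h}(\pi-\theta)=(-1)^{s-\ell}d^{(s)}_{\ell,h}(\theta)$ combined with the reflection $\theta\mapsto\pi-\theta$, which by \eqref{thetas} maps $[\theta_{s-m},\theta_{s-m+1})$ onto $[\theta_{s+m},\theta_{s+m+1})$ up to endpoints of measure zero. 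The two equalities in \eqref{qsymm} are direct consequences of the well-known $|d^{(s)}_{\ell,h}(\theta)|=|d^{(s)}_{h,\ell}(\theta)|=|d^{(s)}_{-\ell,-h}(\theta)|$, again standard for the real-valued Wigner matrix.

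Finally, the sum rules \eqref{sumrule2} reduce to the unitarity (orthogonality) of the matrix $d^{(s)}(\theta)$: both $\sum_{\ell}|d^{(s)}_{\ell,h}(\theta)|^2=1$ and $\sum_{h}|d^{(s)}_{\ell,h}(\theta)|^2=1$ hold for every $\theta$, so that $\sum_\ell q_{\btheta}(m|\ell,h)=\sum_h q_{\btheta}(m|\ell,h)=(s+\tfrac12)\int_{\theta_{s-m}}^{\theta_{s-m+1}}\sin\theta\,\rmd\theta=(s+\tfrac12)(\cos\theta_{s-m}-\cos\theta_{s-m+1})$. The main technical nuisance will be keeping the Wigner-matrix sign conventions straight through the symmetry arguments, but since every property here only involves $|d^{(s)}_{\ell,h}|^2$, the signs drop out and the identities become clean.
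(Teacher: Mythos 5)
Your proposal is correct and follows essentially the same route as the paper: the $\phi$-integration (equivalently, invariance under rotations about $\bi k$) forces diagonality in the $S_z$-eigenbasis and yields \eqref{q+d^2}, the Wigner-$d$ symmetries under $\ell\leftrightarrow h$, sign flips and $\theta\mapsto\pi-\theta$ give \eqref{qpos}--\eqref{qsymm}, and the unitarity of $d^{(s)}(\theta)$ gives \eqref{sumrule2}. The one point to tighten is strict positivity: your claim that $d^{(s)}_{\ell,h}$ ``cannot vanish on any subinterval of positive length'' presupposes that it is not identically zero, and the paper closes exactly this step by observing that $\sum_m q_{\btheta}(m|\ell,h)=1$, i.e.\ $\left(s+\frac12\right)\int_0^\pi \sin\theta\,\abs{d^{(s)}_{\ell,h}(\theta)}^2\,\rmd\theta=1$, so $\abs{d^{(s)}_{\ell,h}}^2$ is a non-trivial polynomial in $\cos\theta$ and its integral over any interval of positive length is strictly positive.
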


\begin{proof} By using the expressions \eqref{discr_s2} and \eqref{AV} inside the probabilities \eqref{def:q} we get
\[ 
\Mo_{\ell,[\bi k]}^{\rho_{h}}(m)=\left(2s+1\right) \int_{\theta_{s-m}}^{\theta_{s-m+1}}\rmd \theta\, \frac {\sin\theta}{4\pi}\int_0^{2\pi}\rmd \phi \Tr\left\{\Zo(h)V(\theta,\phi)\Zo(\ell)V(\theta,\phi)^\dagger \right\}.
\]
By inserting the  decomposition \eqref{Vdecomp} of $V(\theta,\phi)$, we have that the dependence on $\phi $ disappears and \eqref{q+d^2} is obtained.

The structure of the integral in  $\phi$ in the right hand side of \eqref{discr_s2} implies that $\Mo_{\ell,[\bi k]}(m)$ commutes with  $S_z$ and by the irreducibility of the spin representation it is a linear combination of the projections $\Zo(h)$; by the previous result the coefficients in this expansion are the $q$'s and we get \ $\Mo_{\ell,[\bi k]}(m)=\sum_{h=-s}^s q_{\btheta}(m|\ell,h)\Zo(h)$. \
By \eqref{discr_s3} this proves \eqref{margq2}.

As recalled in \ref{W-d-m}, $\abs{d^{(s)}_{\ell,h}(\theta)}^2$ is a polynomial in $\cos \theta$. As we asked $\theta_{s-m}<\theta_{s-m+1}$, the integral of this polynomial in \eqref{q+d^2} can vanish only if $\abs{d^{(s)}_{\ell,h}(\theta)}^2=0$ for all $ \theta$, but this is impossible because we have
\[
1=\sum_m  q_{\btheta}(m|\ell,h)=\left(s+\frac12\right) \int_{0}^{\pi}\rmd \theta\,\sin\theta \abs{d^{(s)}_{\ell,h}(\theta)}^2,
\]
which follows from \eqref{q+d^2} and the fact that $q_{\btheta}(\bullet|\ell,h)$ is a probability.  Therefore the  strict positivity in \eqref{qpos} holds. The second property in \eqref{qpos} follows from \eqref{pi-m} and the symmetry of the angles in the discretization \eqref{thetas}.

Properties \eqref{qsymm} follow immediately from the definition \eqref{def:q} and the symmetries \eqref{dsymm}.

The sum rules \eqref{sumrule2} follow from the property \eqref{dsum}.
\end{proof}

By \eqref{margq2}, the distribution of an approximate spin component $\Mo_{\blambda,[\bi n]}$ in a state $\rho$ is given by the double mixture
\begin{equation}\label{2mixt}
\Mo_{\blambda,[\bi n]}^\rho(m)=\sum_{\ell,h=-s}^s q_{\btheta}(m|\ell,h)\lambda_{\ell}\Ao_{\bi n}^\rho(h).
\end{equation}

\subsubsection{Noise and compatibility.}\label{noise+C}
Definition \ref{def:q-coeff} says that the $q$-coefficients are probabilities with respect to $m$; then, the quantities
$q_{\bi\theta}(\bullet|\ell,\bullet)$ and $\sum_{\ell=-s}^s q_{\btheta}(\bullet|\ell,\bullet)\lambda_{\ell}$ are transition matrices, independent of the system state $\rho$. Then, equations \eqref{margq2} and \eqref{2mixt} can be interpreted by saying that, given the direction $\bi n$, each covariant approximating spin component $\Mo_{\ell,[\bi n]}$ or $\Mo_{\blambda,[\bi n]}$ could be obtained by measuring exactly the target observable $\Ao_{\bi n}$ and then by perturbing the result  with some classical noise through a one-step stochastic evolution given by one of the transition matrices just introduced.
As we have seen in Remark \ref{rem:S2+comp}, the univariate POVMs $\Mo_{\blambda,[\bi n]}$ are all compatible because they are obtained by a classical post-processing from the unique POVM $\Fo_{\blambda}$; the compatibility is not implied by the structure \eqref{margq2} alone. The use of classical transition matrices (Markov kernels) to transform incompatible observables into compatible ones has already been exploited in related problems \cite{BLPY,DamSW15,Haap15}.

A different approach \cite{BLPY,HSTZ14,Haap15,HKR15,DFK19,HeiMZ16} to the construction of compatible observables is to consider noisy versions of the target observables.

\begin{definition}
If $\Do$ is an observable and $\No$ another POVM with the same value space, the mixture
\[
\Do'=\eta \Do+(1-\eta)\No, \qquad \eta\in [0,1],
\]
is said to be a \emph{noisy version} of the observable $\Do$ with noise $\No$ and \emph{visibility} $\eta$.
\end{definition}

Given the target observables $\Do_j$, $j\in I$, and the class of permitted noises, the problem considered in the quoted references is to see how much noise has to be added to the target observables in order to get \textbf{compatible} POVMs of the form $\Do_j'= \eta\Do_j+(1-\eta)\No_j$.
The various approaches in the literature differ for the classes of admissible noises; often only classical noise is considered, i.e.\ $\No_j(\bullet)=p_j(\bullet)\openone$ where $p_j$ is a classical probability, independent of the system state \cite{HeiMZ16,HSTZ14,HKR15}. A review of some choices for the noise classes introduced in the literature is given in \cite{DFK19}; the typical choices are: (a) classical noises, (b) noises represented by compatible POVMs, (c) general POVMs.

The marginals of an approximating joint measurement in $\Mscr(\Acal_\infty)$  can be expressed as noisy versions of the corresponding target observables in a way which will be useful for comparisons, as stated in the following remark.

\begin{remark}
The marginals \eqref{margq2} of the joint measurement $\Mo_{\blambda}$ can be written in the form
\begin{equation}\label{Mo-nv}
\Mo_{\blambda, [\bi n]}(m)=\eta_{\blambda,\,\btheta}\Ao_{\bi n}(m)+\left(1-\eta_{\blambda,\,\btheta}\right)\No^{\blambda,\btheta}_{\bi n}(m),
\end{equation}
\begin{equation}\label{etavis}
\eta_{\blambda,\,\btheta}=\min_{m\in\Xscr}\sum_{\ell \in\Xscr}q_{\btheta}(m|\ell,m)\lambda_\ell, \qquad 0<\eta_{\blambda,\,\btheta}<1,
\end{equation}
\begin{multline}\label{Nstr}
\No^{\blambda,\btheta}_{\bi n}(m)=\frac 1{1-\eta_{\blambda,\,\btheta}}\biggl\{\biggl[\sum_{\ell\in\Xscr} q_{\btheta}(m|\ell,m)\lambda_{\ell}-\min_{m'\in\Xscr}\sum_{\ell \in\Xscr}q_{\btheta}(m'|\ell,m')\lambda_\ell\biggr]\Ao_{\bi n}(m)\\ {}+\sum_{\ell,h\in\Xscr}\left(1-\delta_{hm}\right) q_{\btheta}(m|\ell,h)\lambda_{\ell}\Ao_{\bi n}(h)\biggr\}.
\end{multline}
It is easy to see that $\No^{\blambda,\btheta}(m)_{\bi n}$ is positive and that $\No^{\blambda,\btheta}_{\bi n}$ is indeed a POVM; then, the proof of the decomposition \eqref{Mo-nv} is trivial. This simple expression is due to the fact that each target POVM $\Ao_{\bi n}$ and its approximating POVM $\Mo_{\blambda, [\bi n]}$ are diagonal on the same basis. Due to covariance, the visibility $\eta_{\blambda,\,\btheta}$ does not depend on $\bi n$. Due to the strict positivity \eqref{qpos} of the $q$-coefficients, the visibility is strictly positive; moreover, it cannot be 1, which is possible only when the target observables are already compatible.
\end{remark}

In expressing $\Mo_{\blambda, [\bi n]}$ as a mixture of $\Ao_{\bi n}$ and some ``noise'', the decomposition is not unique. In writing the decompositions \eqref{Mo-nv} we have decided to have the maximum possible value for the visibility $\eta_{\blambda,\,\btheta}$, without imposing conditions on the class of allowed noises. As we remarked above, the last class of noises discussed in \cite{DFK19} is indeed the one of general POVMs. If the class of noises is restricted, the value of the visibility could diminish, as we can see in the example of $s=1/2$, Sect.\ \ref{joint1/2}.

\subsubsection{Unbiased measurements.}\label{sec:unb}
Sometimes, not only symmetries are used to restrict the class of possible approximate joint measurements of some incompatible target observables.
In \cite{BullB18,YLLOh10,YuOh13} spin measurements with \emph{unbiased} marginals are considered; by this they mean that the outcomes of the measurement are uniformly distributed when the system is in the maximally mixed state.  Note that in the field of inferential statistics this term has a different meaning, cf.\ \cite[Chapt.\ 6]{Hol11}.

By taking into account that our target observables $\Ao_{\bi n}$ are indeed unbiased in this sense, it could be reasonable to ask this restriction also for  the approximating observables. In the case of covariant approximate joint measurements, by \eqref{2mixt} and \eqref{sumrule2}, to ask the uniform distribution $\Mo_{\blambda,[\bi k]}^{\rho_0}(m)=1/(2s+1)$, in the maximally mixed state $\rho_0$ \eqref{mms0}, implies immediately the strong restriction
\begin{equation}\label{Deltacos}
\cos\theta_{k}-\cos\theta_{k+1}=\frac 2{2s+1},\qquad \text{i.e.} \qquad \cos\theta_k=\frac{2s+1-2k}{2s+1}.
\end{equation}
This choice corresponds to discretize $\bi n \cdot \bxi$ by dividing the interval $[-1,1]$ into subintervals of equal length.
By using the minimization of information loss as criterium of goodness, as done in Sect.\ \ref{sec:newind},  the best approximate joint measurement not always satisfies this restriction (see Sections \ref{sec:ind1}, \ref{sec:ind32}) and we do not ask for unbiasedness.
Also in other contexts, biased measurements turned out to be optimal \cite{HKR15}.

\subsection{Covariant approximate joint measurements for spin 1/2, 1, 3/2}\label{sec:smalls}
For small spins we can get explicit results by particularizing the discretization procedure of  Section \ref{sec:post-p} and using the $q$-coefficients computed in \ref{app:qcoeff}.
\subsubsection{Spin 1/2.}\label{joint1/2}
In this case only three angles appear in the post-processing and they are completely determined by \eqref{thetas}: $\theta_0=0$, \ $\theta_1=\pi /2 $, \ $ \theta_{2}=\pi$. \ So, no free parameter is introduced by the discretization of the directions and a single free parameter remains, coming from the $\lambda$'s, see Remark \ref{free_p}. These angles automatically satisfy \eqref{Deltacos} and this means that for $s=1/2$ any observable in $\Mscr(\Acal_\infty)$ is unbiased in the sense of Section \ref{sec:unb}.

The most general expression of the approximate spin components \eqref{discr_s1}, \eqref{discr_s2} has been already obtained in \cite[Sect.\ 5]{BarG18}, but it can be computed also from the explicit form of the $q$-coefficients given in \eqref{q1/2}:
\begin{equation}\label{Mk1/2}
\Mo_{\blambda,[\bi k]}(m)=\frac\openone 2 + \left(\lambda_{1/2}-\frac 12\right) 2m S_z, \qquad \lambda_{1/2}\in[0,1].
\end{equation}
By using  $\Zo(m)=\openone-\Zo(-m)$, we can rewrite \eqref{Mk1/2} as
\begin{multline}\label{kmarg1/2}
\Mo_{\blambda,[\bi k]}(m)=\left(\frac 32 -\lambda_{1/2}\right)\frac\openone 2 + \left(\lambda_{1/2}-\frac 12\right)\Zo(m) \\ {}= \left(\frac 12 +\lambda_{1/2}\right)\frac\openone 2 + \left(\frac 12-\lambda_{1/2}\right)\Zo(-m),
\end{multline}
from which we see that $\Mo_{\blambda,[\bi k]}$ is a noisy version of $\Zo$ with classical noise, only when $\lambda_{1/2} \geq \frac 12$.

By allowing general noises, we have the structure \eqref{Mo-nv}, which is a different decompo\-sition of the approximating measures as noisy versions of the target observables. For $s=1/2$, by particularizing \eqref{etavis} and \eqref{Nstr}, we see that the $\btheta$ dependence disappears and the explicit expressions of visibility and noise become
\begin{equation}
\eta_{\blambda}=\frac 14 +\frac{ \lambda_{1/2}}2, \qquad \No^{\blambda}_{\bi n}(m)=\No_{\bi n}(m)=\Ao_{\bi n}(-m).
\end{equation}
Note that in the decomposition \eqref{Mo-nv} for $s=1/2$ the noises turn out to be projection valued measures and they do not commute for different directions; so, the noises $\No_{\bi n}$, $\bi n\in \mathbb{S}_2$, are incompatible. We have asked the compatibility of the POVMs $\Mo_{\blambda,[\bi n]}$, not of the noises.

For $s=1/2$ the probabilities \eqref{2mixt} can be easily computed. Firstly, any state can be parameterized as
\begin{equation}\label{state1/2}
\rho=\frac 12 \left(\openone +2\bi r\cdot \bi S\right),\qquad r=\abs{\bi r}\leq 1;
\end{equation}
note that $2\bi S$ is the vector of the Pauli matrices. Then, by \eqref{AV} and \eqref{Mk1/2}, we have
\begin{equation}\label{prob1/2}
\Ao_{\bi n}^\rho(m)=\frac 12+m \,\bi n \cdot \bi r, \qquad \Mo_{\blambda,[\bi n]}^\rho(m)=\frac 12+\left(\lambda_{1/2}-\frac 12\right) m\,\bi n \cdot \bi r.
\end{equation}

\subsubsection{Spin 1.}\label{joint1}
The choice of the angles \eqref{thetas} gives
$0=\theta_0<\theta_1 < \theta_2=\pi-\theta_1<\theta_3=\pi$,
and it introduces a single free parameter
\begin{equation} \label{s1_a}
a:=\cos\theta_1, \qquad a\in(0,1).
\end{equation}
Other two free parameters come from the $\lambda$'s, see Remark \ref{free_p}. The $q$-coefficients are computed in
\ref{app:s=1}; then, the approximate spin components \eqref{margq2} take the expressions
\begin{subequations}\label{M11}
\begin{multline}
\Mo_{1,[\bi k]}(\pm 1)=\Mo_{-1,[\bi k]}(\mp 1) 
\\ {}=\left[1-\frac{(1+a)^3}8\right]\Zo(\pm 1)+\frac{2+a}4\,(1-a)^2\,\Zo(0)+\frac{(1-a)^3}8\, \Zo(\mp 1),
\end{multline}
\begin{equation}
\Mo_{1,[\bi k]}(0)=\Mo_{-1,[\bi k]}(0)=\frac a2 \left(3-a^2\right)\Zo(0) + \frac a4\left(3+a^2\right) \left[\Zo(1)+\Zo(-1)\right] ,
\end{equation}
\end{subequations}
\begin{equation}\label{M00}\begin{split}
&\Mo_{0,[\bi k]}(\pm 1)=\frac{2+a}4\,(1-a)^2\left[\Zo(1)+\Zo(-1)\right]+ \frac {1-a^3}2\,\Zo(0),
\\
&\Mo_{0,[\bi k]}(0)=a^3\Zo(0)+\frac{a}2\,(3-a^2)\left[\Zo(1)+\Zo(-1)\right].\end{split}
\end{equation}

To get unbiased marginals, according to \eqref{Deltacos} we would have to take $a=1/3$; as we already wrote in Section \ref{sec:unb} we do not ask for this and we leave free the parameter $a$.

\subsubsection{Spin 3/2.}\label{joint3/2}
For $s=3/2$, the choice of the angles \eqref{thetas} gives
\[ 0=\theta_0<\theta_1 < \theta_2=\frac \pi2< \theta_3=\pi-\theta_1< \theta_4=\pi,
\]
and it introduces a single free parameter: $a:=\cos\theta_1$, \ $a\in(0,1)$.
Other three free parameters come from the $\lambda$'s, see Remark \ref{free_p}. The $q$-coefficients are computed in
\ref{app:spin3/2}; then, the approximate spin components are given by \eqref{margq2}, \eqref{discr_s1} and the probability distribution by \eqref{2mixt} (we to not write explicitly them, because the formulae are very long). To get unbiasedness, according to \eqref{Deltacos} we would have to take $a=1/2$.

\section{Entropic MURs for the set of all the spin components}\label{sec:newind}
A spin vector can not be exactly measured, as its components are incompatible observables  and a joint measurement can only  approximate them. In information theory \cite{Ved02,BA02,CovT06} the relative entropy is the quantity introduced to measure the error done when one uses an approximating probability distribution in place of the true one. Let us stress that the relative entropy is an intrinsic quantity: it is independent of the measure units of the involved observables and from renaming or reordering the possible values. Such a property does not hold for non entropic measures of the error.

In \cite{BGT18} we used as error function the sum of the relative entropies, each one involving a single target observable, because this sum represents the total loss of information; however, this approach can not be extended to infinitely many observables. To overcome this difficulty, instead of the sum, we shall consider the maximum of the relative entropies over all target observables: this maximum represents the loss of information for the worst direction. Then, we consider the worst case also with respect to the system state. Finally, we shall optimize with respect to all approximating joint measurements. This is indeed the procedure used in \cite{DamSW15,Jost+19,BullB18}, apart from the starting point (distances between distributions for them).

\subsection{The device information loss}\label{sec:dil}

Let us recall that $\Acal_\infty$ \eqref{def:Ainfty} is the set of all the spin components (our target observables), that $\Mscr_\infty$ is the class of the approximate joint measurements for all the spin components, and that $\Mscr(\Acal_\infty)$ is the class of the covariant ones, $\Mscr(\Acal_\infty)\subset \Mscr_\infty$ (see Sect.\ \ref{sec:post-p}). If $\Ao_{\bi n}\in \Acal_\infty$ and $\Mo\in \Mscr_\infty$, we denote by  $\Mo_{[\bi n]}$ the univariate marginal of $\Mo$ approximating $\Ao_{\bi n}$ and we call it the approximate spin component. With $\Ao_{\bi n}^\rho$ we denote the distribution of $\Ao_{\bi n}$ in the state $\rho$, and similar notation for the other observables.

To quantify the information loss due to the use of $\Mo^\rho_{[\bi n]}$ in place of the target distribution $\Ao_{\bi n}^\rho$, we take the relative entropy
\begin{equation}\label{relentMon}
S\big(\Ao_{\bi n}^\rho\big\|\Mo_{[\bi n]}^\rho\big)=\sum_{m\in \Xscr}\Ao_{\bi n}^\rho(m)\log \frac{\Ao_{\bi n}^\rho(m)}{ \Mo_{[\bi n]}^\rho(m)}\geq 0, \qquad \Mo\in \Mscr_\infty,
\end{equation}
where the logarithm is with base 2: $\log \equiv \log_2$. Recall that the form $0\log 0$ is taken to be zero and that the relative entropy can be $+\infty$ when the support of the second probability distribution is not contained in the support of the first one. When a covariant measurement is considered, by using the expression of $\Mo^\rho_{\blambda,[\bi n]}$ in terms of the $\lambda$'s and the $q$-coefficients in \eqref{2mixt}, we have
\begin{equation}\label{Skq}
S\big(\Ao^\rho_{\bi n}\|\Mo^\rho_{\blambda,[\bi n]}\big)=\sum_{m\in\Xscr}\Ao^\rho_{\bi n}(m)\log \frac{\Ao^\rho_{\bi n}(m)} {\sum_{\ell,h} q_{\btheta}(m|\ell,h) \lambda_{\ell}\Ao^\rho_{\bi n}(h)}, \qquad \Mo_{\blambda}\in \Mscr(\Acal_\infty).
\end{equation}
As all the $q$-coefficients are strictly positive \eqref{qpos}, the relative entropy \eqref{Skq} is always finite.

The relative entropy \eqref{relentMon} depends on the state and on the choice of the observable (the direction $\bi n$).  To characterize an information loss due only to the measuring device, represented by the multi-observable $\Mo$ approximating all the observables in $\Acal_\infty$, we consider the worst case of \eqref{relentMon} with respect to the system state and the measurement direction. So, we define the \emph{device information loss} by
\begin{equation}\label{sie}
\Delta_s[\Acal_\infty\|\Mo]:=\sup_{\rho\in\Sscr_s,\;\bi n\in\mathbb{S}_2}S\big(\Ao_{\bi n}^\rho\big\|\Mo_{[\bi n]}^\rho\big), \qquad \Mo\in \Mscr_\infty.
\end{equation}

This quantity is the analogue of the entropic divergence introduced in \cite[Definition 2]{BGT18}); to use the worst case on the directions instead of the sum of the relative entropies, as done there, allows to consider also infinitely many target observables. Alternatively, in \cite{BarG18} we started from the mean of the relative entropies made over all the directions, but this approach gives rise to computations intractable outside the case $s=1/2$, and without possible extensions in cases in which an invariant mean does not exist.

\begin{theorem}
\label{prop:infloss}
The device information loss \eqref{sie} is always strictly positive:
\begin{equation}\label{Deltapos}
\Delta_s[\Acal_\infty\|\Mo]>0, \qquad \forall \Mo\in \Mscr_{\infty}.
\end{equation}
Moreover, $\forall \Mo\in \Mscr_{\infty} $ there exists $\hat \Mo\in \Mscr(\Acal_\infty)$ such that
\begin{equation}\label{Deltasymm}
\Delta_s[\Acal_\infty\|\hat\Mo]\leq \Delta_s[\Acal_\infty\|\Mo].
\end{equation}

In the case of a covariant measurement, the double supremum in the definition \eqref{sie} of the device information loss is a maximum, and we have
\begin{equation}\label{ntok}
\Delta_s[\Acal_\infty\|\Mo]=\max_{\rho\in\Sscr_s}S(\Ao_{\bi n}^\rho\|\Mo^\rho_{[\bi n]})
<+\infty, \qquad \forall \bi n, \quad \forall \Mo\in \Mscr(\Acal_\infty).
\end{equation}
Moreover, the maximum over the states is realized in an eigen-projection of the spin component:
\begin{equation}\label{Delta+S}
\Delta_s[\Acal_\infty\|\Mo]= \max_{m\in \Xscr}S(\Ao_{\bi n}^{\rho^{\bi n}_m}\|\Mo^{\rho^{\bi n}_m}_{[\bi n]}), \qquad \rho^{\bi n}_m:= \Ao_{\bi n}(m), \qquad \forall \Mo\in \Mscr(\Acal_\infty).
\end{equation}
Finally, in terms of the $q$-coefficients \eqref{def:q}, the device information loss  \eqref{sie} is given by
\begin{equation}\label{Delta+q}
\Delta_s[\Acal_\infty\|\Mo_{\blambda}] =\log \left(\min_{m\in \Xscr}\sum_{\ell}\lambda_{\ell}q_{\btheta}(m|\ell,m)\right)^{-1}, \qquad \forall \Mo_{\blambda}\in \Mscr(\Acal_\infty).
\end{equation}
\end{theorem}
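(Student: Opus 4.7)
The theorem packages five claims, which I would address in the order (i) strict positivity, (ii) reduction to a covariant approximator, (iii)--(iv) attainment of the supremum on an eigen-projection in the covariant case, and (v) the explicit formula. For (i), if $\Delta_s[\Acal_\infty\|\Mo]=0$, vanishing of the relative entropies \eqref{relentMon} forces $\Mo_{[\bi n]}^\rho=\Ao_{\bi n}^\rho$ for every state and direction, hence $\Mo_{[\bi n]}=\Ao_{\bi n}$ as POVMs for every $\bi n$; but the marginals of a single POVM $\Mo$ are mutually compatible, while the sharp spin components are pairwise incompatible, a contradiction.

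For (ii), I would covariantize by a Haar average on the sphere. Writing $\Mo=\Mo_\Fo$ with $\Fo\in\tilde\Fscr(\mathbb{S}_2)$, set $\hat\Fo(B):=\int_{SO(3)}U(R)^\dagger\Fo(RB)U(R)\,dR$ with $dR$ the normalized Haar measure. Positivity and normalization of $\hat\Fo$ are immediate, and the change of variable $R\mapsto RR'^{-1}$ gives the covariance $U(R')\hat\Fo(B)U(R')^\dagger=\hat\Fo(R'B)$, so $\hat\Fo\in\Fscr(\mathbb{S}_2)$ and $\hat\Mo:=\Mo_{\hat\Fo}\in\Mscr(\Acal_\infty)$. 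By \eqref{RCn} the post-processed marginals satisfy $\hat\Mo_{[\bi n]}^\rho(m)=\int\Mo_{[R\bi n]}^{U(R)\rho U(R)^\dagger}(m)\,dR$, and target covariance \eqref{covA} gives the trivial identity $\Ao_{\bi n}^\rho(m)=\int\Ao_{R\bi n}^{U(R)\rho U(R)^\dagger}(m)\,dR$ with constant integrand. Joint convexity of $S(\cdot\|\cdot)$ then yields
\[
S\bigl(\Ao_{\bi n}^\rho\bigm\|\hat\Mo_{[\bi n]}^\rho\bigr)\leq\int S\bigl(\Ao_{R\bi n}^{U(R)\rho U(R)^\dagger}\bigm\|\Mo_{[R\bi n]}^{U(R)\rho U(R)^\dagger}\bigr)\,dR\leq\Delta_s[\Acal_\infty\|\Mo],
\]
and the supremum over $(\rho,\bi n)$ proves \eqref{Deltasymm}.

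For (iii)--(v), assume $\Mo=\Mo_\blambda$ is covariant. Combining \eqref{symm_marg} with \eqref{covA} makes $(\rho,\bi n)\mapsto S(\Ao_{\bi n}^\rho\|\Mo_{\blambda,[\bi n]}^\rho)$ invariant under the simultaneous action $(\rho,\bi n)\mapsto(U(R)\rho U(R)^\dagger,R\bi n)$, so the supremum over $\bi n$ is redundant and one may fix $\bi n=\bi k$. By Theorem \ref{prop:q} both $\Ao_{\bi k}$ and $\Mo_{\blambda,[\bi k]}$ are diagonal in the $S_z$ basis, hence the two distributions depend on $\rho$ only through $p_h:=\langle h|\rho|h\rangle_z$, giving $\Ao_{\bi k}^\rho(m)=p_m$ and $\Mo_{\blambda,[\bi k]}^\rho(m)=\sum_h c_{mh}p_h$ with $c_{mh}:=\sum_\ell\lambda_\ell q_\btheta(m|\ell,h)$ column-stochastic (by the normalization inherent to Definition \ref{def:q-coeff}) and strictly positive by \eqref{qpos}. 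Continuity on the compact simplex of diagonal distributions yields a maximum, proving \eqref{ntok}. The log-sum inequality applied termwise, with $a_h=\delta_{mh}p_h$ and $b_h=c_{mh}p_h$, gives $p_m\log\bigl(p_m/\sum_h c_{mh}p_h\bigr)\leq p_m\log(1/c_{mm})$, and summing yields
\[
S(\Ao_{\bi k}^\rho\|\Mo_{\blambda,[\bi k]}^\rho)\leq\sum_m p_m\log\frac1{c_{mm}}\leq\log\frac1{\min_m c_{mm}},
\]
with saturation at $p=\delta_{m^\ast}$ for any $m^\ast\in\arg\min_m c_{mm}$, which corresponds to $\rho=\Zo(m^\ast)=\rho^{\bi k}_{m^\ast}$; this establishes \eqref{Delta+S} and \eqref{Delta+q} simultaneously.

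The subtlest step is (ii): one must verify that the Haar average preserves the sphere-POVM class so that $\hat\Mo\in\Mscr(\Acal_\infty)$, and that joint convexity can be invoked on the specific integral representations, which crucially rests on rewriting $\Ao_{\bi n}^\rho$ itself as the same kind of Haar integral via target covariance. Once this is set up, (iii)--(v) reduce to classical information-theoretic estimates on a finite simplex.
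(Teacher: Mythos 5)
Your proof is correct and follows essentially the same route as the paper's: positivity from incompatibility of the sharp components, covariantization by averaging over rotations combined with convexity of the relative entropy, and reduction to the simplex of diagonal distributions where the maximum sits at a vertex, i.e.\ at an eigen-projection. The only cosmetic differences are that you symmetrize with the full Haar average over $SO(3)$ where the paper averages only over the two-parameter family $V(\theta,\phi)\Fo\big(R_{\bi u(\phi)}(\theta)^{-1}B\big)V(\theta,\phi)^\dagger$ with weight $\frac{\sin\theta}{4\pi}\,\rmd\theta\,\rmd\phi$ (your variant makes the covariance of $\hat\Fo$ immediate), and that you derive \eqref{Delta+q} via the log-sum inequality with explicit saturation at $p=\delta_{m^*}$, where the paper invokes convexity of $\rho\mapsto S(\Zo^\rho\|\Mo^\rho_{[\bi k]})$ on the simplex to place the supremum at an eigen-state of $S_z$.
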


\begin{proof}
The relative entropy is equal to zero if and only if the two probability distributions coincide; by the incompatibility of the spin observables, the device information loss \eqref{sie} is strictly positive and \eqref{Deltapos} is proved.

To prove \eqref{Deltasymm}, we need the notion of
symmetrized version of a generic POVM on the sphere. The symmetrization $\hat \Fo$ of $\Fo\in \tilde\Fscr(\mathbb S_2)$ is defined by
\begin{equation}\label{ize}
\hat \Fo(B)=\int_0^{2\pi} \rmd \phi \int_0^\pi \rmd \theta \, \frac {\sin \theta}{4\pi} \, V(\theta,\phi) \Fo\big(R_{\bi u(\phi)}(\theta)^{-1} B\big) V(\theta,\phi)^\dagger,
\end{equation}
where the rotation $R_{\bi u(\phi)}(\theta)$ and the corresponding unitary operator $V(\theta,\phi)$ are defined in  equations \eqref{R_u}, \eqref{V=UR}. One can check that the covariance property, given in Remark \ref{rem:cov}, holds for $\hat \Fo$, and that a covariant POVM is left invariant by the transformation \eqref{ize}:
\begin{equation*}
\Fo\in \tilde\Fscr(\mathbb S_2)\Rightarrow \hat \Fo\in \Fscr(\mathbb S_2), \qquad \Fo\in \Fscr(\mathbb S_2)\Rightarrow \hat \Fo=\Fo.
\end{equation*}
From $\hat \Fo$, by the post-processing \eqref{gen_marg}, we construct $\hat\Mo\in \Mscr(\Acal_\infty)$; by the property \eqref{RCn} and the definition \eqref{ize}, we get
\begin{equation}\label{hatMn} 
\hat \Mo_{[\bi n]}(m)=\hat \Fo\big(C_{\bi n}(m)\big)=\int_0^{2\pi} \rmd \phi \int_0^\pi \rmd \theta \, \frac {\sin \theta}{4\pi} \, V(\theta,\phi) \Fo\big( C_{R(\theta,\phi)^{-1}\bi n}(m)\big) V(\theta,\phi)^\dagger.
\end{equation}
By this construction, any $\Mo \in \Mscr_\infty$ is generated by post-processing some $\Fo\in\tilde\Fscr(\mathbb S_2)$; let $\hat \Fo$  be the symmetrization \eqref{ize} of $\Fo$ and let $\hat \Mo\in \Mscr(\Acal_\infty)$ be the corresponding measure obtained by post-processing $\hat \Fo$. Now, we set
\[
\rho(\theta,\phi):=V(\theta,\phi)^\dagger\rho V(\theta,\phi), \qquad R(\theta,\phi):= R_{\bi u(\phi)}(\theta),
\]
where  $R_{\bi u(\phi)}(\theta)$ is the rotation involved in $V(\theta,\phi)$, see \eqref{V=UR}.
By \eqref{hatMn}, \eqref{covA}, and the convexity of the relative entropy, we get
\begin{multline*}
S(\Ao_{\bi n}^\rho\|\hat \Mo^\rho_{[\bi n]})\leq \int_0^{2\pi}\rmd \phi \int_0^\pi \rmd \theta \, \frac {\sin \theta}{4\pi} \, S\Big(\Ao_{\bi n}^\rho\Big\|\Tr\left\{ \rho(\theta,\phi) \Fo\big( C_{R(\theta,\phi)^{-1}\bi n}(m)\big) \right\}\Big)
\\ {}=\int_0^{2\pi}\rmd \phi \int_0^\pi \rmd \theta \, \frac {\sin \theta}{4\pi}\,S\Big(\Ao_{R(\theta,\phi)^{-1}\bi n}^{\rho(\theta,\phi)}\Big\|\Tr\left\{ \rho(\theta,\phi) \Fo\big(C_{R(\theta,\phi)^{-1}\bi n}(m)\big) \right\}\Big).
\end{multline*}
By taking the supremum of the definition \eqref{sie} we get
\begin{multline*}
\Delta_s[\Acal_\infty\|\hat\Mo]\leq \sup_{\rho\in\Sscr_s,\;\bi n\in\mathbb{S}_2}\int_0^{2\pi}\rmd \phi \int_0^\pi \rmd \theta \, \frac {\sin \theta}{4\pi}\,S\Big(\Ao_{R(\theta,\phi)^{-1}\bi n}^{\rho(\theta,\phi)}\Big\|\Tr\left\{ \rho(\theta,\phi) \Fo\big( C_{R(\theta,\phi)^{-1}\bi n}(m)\big) \right\}\Big)
\\  {}\leq \int_0^{2\pi}\rmd \phi \int_0^\pi \rmd \theta \, \frac {\sin \theta}{4\pi}\,\sup_{\rho\in\Sscr_s,\;\bi n\in\mathbb{S}_2}S\Big(\Ao_{\bi n}^{\rho}\Big\|\Tr\left\{ \rho \Fo\big( C_{\bi n}(m)\big) \right\}\Big),
\end{multline*}
and this gives \eqref{Deltasymm}.

Now we take $\Mo\in \Mscr(\Acal_\infty)$. In the double $\sup$ in \eqref{sie} we can execute the supremum over the states first. By covariance, the quantity $\sup_{\rho\in\Sscr_s}S(\Ao_{\bi n}^\rho\|\Mo^\rho_{[\bi n]})$ is independent of $\bi n$ and we obtain
\[
\Delta_s[\Acal_\infty\|\Mo]=\sup_{\rho\in\Sscr_s}S(\Ao_{\bi n}^\rho\|\Mo^\rho_{[\bi n]})=\sup_{\rho\in\Sscr_s}S(\Zo^\rho\|\Mo^\rho_{[\bi k]}).
\]
By convexity, the supremum over the states of the expression \eqref{Skq} is a maximum among the $2s+1$ eigen-states of $S_z$ and we get \eqref{Delta+S}, the equality in \eqref{ntok}, and
\[
\sup_{\rho\in \Sscr_s}S(\Zo^\rho\|\Mo^\rho_{[\bi k]})=\max_{m\in \Xscr}\log \left(\sum_{m'}\lambda_{m'}q(m|m',m)\right)^{-1}.
\]
Then, the device information loss \eqref{sie} can be written in the form \eqref{Delta+q}, which is finite because of the strict positivity \eqref{qpos} of the $q$'s.
\end{proof}

\subsection{The minimum information loss}\label{sec:mil}

By optimizing over the class $\Mscr(\Acal_\infty)$ of the physical approximating measurements we get a lower bound for the device information loss
\begin{equation}\label{def:Is}
I_s[\Acal_\infty\|\Mscr(\Acal_\infty)]:=\inf_{\Mo\in\Mscr(\Acal_\infty)}\Delta_s[\Acal_\infty\|\Mo];
\end{equation}
we call it \emph{minimum information loss}. An analogous quantity can be defined also for the larger class $\Mscr_\infty$:
\begin{equation}\label{def:Is2}
I_s[\Acal_\infty\|\Mscr_\infty]:=\inf_{\Mo\in\Mscr_\infty}\Delta_s[\Acal_\infty\|\Mo].
\end{equation}
The two minimum information losses turn out to be equal, as shown in Theorem \ref{Iprop}.

The quantity $I_s[\Acal_\infty\|\Mscr(\Acal_\infty)]$ has interesting properties; in particular, as shown in Theorem \ref{Iprop}, it is strictly positive. Moreover, in the spin definition given in Section \ref{sec:jmeas} we have used  $\hbar=1$, but \eqref{def:Is} is independent of this choice, because of the invariance properties of the relative entropy. The minimum information loss will appear in the formulations of the MURs (Sect.\ \ref{sec:MUR}) and it can be used as a measure of the incompatibility of the set of the target observables.
The expression \eqref{def:Is} can be elaborated and a more explicit form can be obtained.

\begin{theorem}\label{Iprop}
The two information losses \eqref{def:Is} and \eqref{def:Is2} are equal:
\begin{equation}\label{I=I}
I_s[\Acal_\infty\|\Mscr(\Acal_\infty)]=I_s[\Acal_\infty\|\Mscr_\infty].
\end{equation}
The minimum information loss \eqref{def:Is} can be expressed in terms of the $q$-coefficients \eqref{def:q} as
\begin{equation}\label{I+q}
I_s[\Acal_\infty\|\Mscr(\Acal_\infty)]=\log \left(K_s\right)^{-1}, \qquad K_s:=\sup_{\blambda, \btheta}\min_{m}\sum_{\ell }\lambda_{\ell}q_{ \btheta}(m|\ell,m),
\end{equation}
where $\btheta$ is the set of angles satisfying the discretization conditions \eqref{thetas} and involved in the expression \eqref{q+d^2} of the $q$-coefficients.
Moreover, the following bounds hold:
\begin{equation}\label{I<}
0<I_s[\Acal_\infty\|\Mscr(\Acal_\infty)]\leq \log\left(2s+1\right).
\end{equation}
\end{theorem}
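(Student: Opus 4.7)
The plan is to address the three claims of Theorem~\ref{Iprop} in sequence, since each builds on the previous ones. The formula \eqref{I+q} is essentially an immediate consequence of the characterization \eqref{Delta+q} already established in Theorem~\ref{prop:infloss}; the equality \eqref{I=I} of the two infima follows from the symmetrization procedure built in the proof of \eqref{Deltasymm}; only the bounds \eqref{I<} require real work, and the delicate part will be the strict positivity.

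First I would derive \eqref{I+q}. By \eqref{Delta+q} every $\Mo_{\blambda}\in \Mscr(\Acal_\infty)$ satisfies $\Delta_s[\Acal_\infty\|\Mo_{\blambda}]=\log\bigl(\min_m\sum_\ell\lambda_\ell q_{\btheta}(m|\ell,m)\bigr)^{-1}$. Since $\log(\cdot)^{-1}$ is monotone decreasing, taking the infimum over $\blambda,\btheta$ turns it into a supremum inside, yielding the claimed expression with $K_s$. For \eqref{I=I}, one inclusion is trivial from $\Mscr(\Acal_\infty)\subset \Mscr_\infty$. The other follows from \eqref{Deltasymm}: for any $\Mo\in\Mscr_\infty$ its symmetrization $\hat\Mo\in\Mscr(\Acal_\infty)$ satisfies $\Delta_s[\Acal_\infty\|\hat\Mo]\leq \Delta_s[\Acal_\infty\|\Mo]$, so $I_s[\Acal_\infty\|\Mscr(\Acal_\infty)]\leq \Delta_s[\Acal_\infty\|\Mo]$, and taking the infimum over $\Mo\in\Mscr_\infty$ gives the reverse inequality.

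Next I would handle the upper bound $I_s[\Acal_\infty\|\Mscr(\Acal_\infty)]\leq \log(2s+1)$, equivalently $K_s\geq \frac{1}{2s+1}$. This is done by an explicit choice: take $\lambda_\ell=\lambda_\ell^0=\frac{1}{2s+1}$ (uniform) and the unbiased angles \eqref{Deltacos}, so that $\cos\theta_{s-m}-\cos\theta_{s-m+1}=\frac{2}{2s+1}$. Using the sum rule \eqref{sumrule2} one computes
\[
\sum_\ell \lambda_\ell^0\, q_{\btheta}(m|\ell,m)=\frac{1}{2s+1}\left(s+\tfrac12\right)\bigl(\cos\theta_{s-m}-\cos\theta_{s-m+1}\bigr)=\frac{1}{2s+1},
\]
independently of $m$, so the minimum over $m$ is $\frac{1}{2s+1}$, giving the required lower bound on $K_s$.

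Finally, the strict positivity $I_s[\Acal_\infty\|\Mscr(\Acal_\infty)]>0$ amounts to $K_s<1$, and this is the main obstacle. My strategy is a compactness plus contradiction argument. The parameter space for $(\blambda,\btheta)$, after closing up the discretization inequalities in \eqref{thetas}, is compact, and $(\blambda,\btheta)\mapsto \min_m\sum_\ell \lambda_\ell q_{\btheta}(m|\ell,m)$ is continuous by \eqref{q+d^2}; moreover, on configurations with $\theta_k=\theta_{k+1}$ one gets $q_{\btheta}(s-k|\ell,h)=0$ and the minimum collapses to zero, so the supremum $K_s$ is in fact attained at some interior $(\blambda^*,\btheta^*)$. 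Suppose by contradiction $K_s=1$; then $\sum_\ell \lambda_\ell^*\, q_{\btheta^*}(m|\ell,m)=1$ for every $m\in\Xscr$. This sum equals $\langle m|\Mo_{\blambda^*,[\bi k]}(m)|m\rangle_z$, and since $\Mo_{\blambda^*,[\bi k]}(m)\leq \openone$ the saturation forces $\Mo_{\blambda^*,[\bi k]}(m)|m\rangle_z=|m\rangle_z$. Combined with $\sum_m \Mo_{\blambda^*,[\bi k]}(m)=\openone$, this implies $\Mo_{\blambda^*,[\bi k]}(m)=\Zo(m)$ for every $m$. By the covariance \eqref{symm_marg}, this would give $\Mo_{\blambda^*,[\bi n]}(m)=\Ao_{\bi n}(m)$ for every direction $\bi n$, so that $\Mo_{\blambda^*}$ would be a joint measurement of all the ideal spin components $\{\Ao_{\bi n}:\bi n\in\mathbb S_2\}$. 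This contradicts the well-known mutual incompatibility of spin components in any two non-parallel directions, completing the argument.
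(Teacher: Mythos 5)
Your treatment of \eqref{I+q}, of the equality \eqref{I=I}, and of the upper bound in \eqref{I<} coincides with the paper's proof (indeed your computation for the upper bound, via the sum rule \eqref{sumrule2} with the uniform $\blambda^0$ and the angles \eqref{Deltacos}, is a cleaner rendering of the same choice the paper makes). Where you genuinely diverge is the strict positivity. The paper imports the result of \cite{BGT18} that the entropic incompatibility degree $c_{\rm inc}(\Xo,\Yo)$ of two orthogonal components is strictly positive, and chains inequalities to get $0<c_{\rm inc}(\Xo,\Yo)\leq 2I_s[\Acal_\infty\|\Mscr(\Acal_\infty)]$. You instead argue by compactness of the $(\blambda,\btheta)$ parameter space, continuity of $(\blambda,\btheta)\mapsto\min_m\sum_\ell\lambda_\ell q_{\btheta}(m|\ell,m)$, exclusion of the degenerate boundary $\theta_k=\theta_{k+1}$ (where the minimum vanishes while $K_s\geq(2s+1)^{-1}$), and then a contradiction: $K_s=1$ would force $\Mo_{\blambda^*,[\bi k]}(m)=\Zo(m)$ via the saturation argument for $\langle m|\Mo_{\blambda^*,[\bi k]}(m)|m\rangle_z=1$ with $\Mo\leq\openone$ and normalization, hence by covariance a joint measurement of all the sharp, noncommuting components. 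Each step checks out. Your route is self-contained (it needs only the standard fact that noncommuting sharp observables are not jointly measurable, which the paper itself invokes to prove \eqref{Deltapos}) and has the bonus of showing that the supremum defining $K_s$ is attained, which the paper only tacitly assumes in Remark \ref{optimalMo}. What you give up is twofold: the paper's argument yields the quantitative lower bound $I_s\geq\frac12 c_{\rm inc}(\Xo,\Yo)$, and, as the paper notes after its proof, it works verbatim for any class of approximating joint measurements, whereas your argument is tied to the finite-dimensional covariant parametrization (positivity for $\Mscr_\infty$ then still follows, but only through \eqref{I=I}).
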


\begin{proof}
Obviously, we have $I_s[\Acal_\infty\|\Mscr(\Acal_\infty)]\geq I_s[\Acal_\infty\|\Mscr_\infty]$, because $\Mscr(\Acal_\infty)\subset \Mscr_\infty$. The opposite inequality is implied by \eqref{Deltasymm}; so,  equality \eqref{I=I} is proved.

To get $I_s[\Acal_\infty\|\Mscr(\Acal_\infty)]$ from \eqref{Delta+q}, one has to minimize over the $\lambda$'s and the discretization angles:
\[
I_s[\Acal_\infty\|\Mscr(\Acal_\infty)]=\inf_{\blambda, \btheta}\log \left(\min_{m}\sum_{\ell}\lambda_{\ell}q_{\bi\theta}(m|\ell,m)\right)^{-1} 
=\log\left(\sup_{\blambda, \btheta}\min_{m}\sum_{\ell}\lambda_{\ell}q_{\bi\theta}(m|\ell,m)\right)^{-1};
\]
this gives \eqref{I+q}.
Then, with the choice $\lambda_{\ell}=1/(2s+1)$ and \eqref{Deltacos} for the angles, we have
\[
K_s\geq \sup_{\btheta}\max_m\sum_{\ell}\frac{q_{\btheta}(m|\ell,m)}{2s+1}=\frac 12\left(\cos\theta_{s-m}-\cos\theta_{s-m+1}\right)=(2s+1)^{-1};
\]
this proves the upper bound in \eqref{I<}.

To prove the first inequality in \eqref{I<} we relay on the results of \cite{BGT18}. The entropic incompatibility degree for two target observables, defined in \cite[(10)]{BGT18}, is strictly positive when the two observables are incompatible \cite[Theor.\ 2, point (v)]{BGT18}. Moreover, the class of the POVMs on $\Xscr^2$, $\Mo\in\Mscr(\Xscr^2)$, is larger than the class of the bivariate marginals of measures in $\Mscr(\Acal_\infty)$. By starting from  two orthogonal spin components, $\Xo,\Yo$, we get
\begin{multline*}
0\overset{(1)}{<} c_{\rm inc}(\Xo,\Yo)\overset{(2)}{=}\inf_{\Mo\in\Mscr(\Xscr^2)}\sup_{\rho\in\Sscr_s}\sum_{i=1}^2S\big( \Xo^\rho_i\|\Mo^\rho_{[i]}\big) \overset{(3)}{\leq} \inf_{\Mo\in\Mscr(\Xscr^2)}\sup_{\rho\in\Sscr_s}2\max_{i=1,2}S\big( \Xo^\rho_i\|\Mo^\rho_{[i]}\big)
\\ {}\overset{(4)}{\leq} 2\inf_{\Mo\in\Mscr(\Acal_\infty)}\sup_{\rho\in\Sscr_s}\max_{i=1,2}S\big( \Xo^\rho_i\|\Mo^\rho_{[i]}\big) \overset{(5)}{\leq} 2\inf_{\Mo\in\Mscr(\Acal_\infty)}\sup_{\rho\in\Sscr_s,\;\bi n\in\Rbb^3,\; \abs {\bi n}=1}S\left(\Ao_{\bi n}^\rho\|\Mo_{[\bi n]}^\rho\right)
\\ {}\overset{(6)}{=}2I_s[\Acal_\infty\|\Mscr(\Acal_\infty)].
\end{multline*}
Here (1) is the result of \cite{BGT18}, (2) is the definition of $c_{\rm inc}$, (3) is because we  substitute the sum with two times the maximum, (4) is because we have restricted the class of approximating joint measurements in the infimum, (5) is because we enlarge the set of directions in the maximum, (6) is by our definition \eqref{sie}, \eqref{def:Is}. This ends the proof of the strict positivity.
\end{proof}

Let us remark that the last part of the proof, proving the strict positivity in \eqref{I<}, works for every class of approximate joint measurements one could use in the infimum, not only for our choices $\Mscr(\Acal_\infty)$ and $\Mscr_\infty$. The point is that every spin component $\Ao_{\bi n}$ has to be approximated by a POVM $\Mo_{[\bi n]}$ on the same output space $\Xscr$ and that the $\Mo_{[\bi n]}$, $\bi n\in \mathbb{S}_2$, must be compatible.

\subsection{Entropic MURs}\label{sec:MUR}

By the strict positivity of the minimum information loss proved in Theorem \ref{Iprop}, the definitions \eqref{def:Is}, \eqref{def:Is2}, and the equality \eqref{I=I}, we get a first formulation of the MURs, in a state independent form, which is analogous to that given in \cite[(11)]{DamSW15}.

\begin{remark}[MURs, first version]\label{MURsfirst} For every approximate joint measurement $\Mo$ of all the spin components, the device information loss \eqref{sie} is greater than a strictly positive lower bound:
\[
\Delta_s[\Acal_\infty\|\Mo] \geq I_s[\Acal_\infty\|\Mscr(\Acal_\infty)]>0, \qquad \forall \Mo\in\Mscr_\infty.
\]
\end{remark}

By the comments above
we have that non trivial entropic MURs can be formulated also if we change the class of approximate joint measurements $\Mscr_\infty$ with some other class; what can change is the value of the (strictly positive) minimum information loss.

\begin{remark}\label{optimalMo}
By the expression \eqref{Delta+S} of the device information loss, we can write  \eqref{I+q} as
\begin{equation}\label{I+S}
I_s[\Acal_\infty\|\Mscr(\Acal_\infty)]=\inf_{\blambda, \btheta}\max_{m} S(\Ao_{\bi n}^{\rho^{\bi n}_m}\|\Mo^{\rho^{\bi n}_m}_{\blambda,[\bi n]}),
\end{equation}
where $\rho^{\bi n}_m$ is the eigen-projection of $\bi n\cdot \bi S$ with respect to the eigen-value $m$ and the discretization angles are implicitly contained in $\Mo_{\blambda}$.
When the infimum is realized in a point $\blambda=\blambda^*$, $\btheta=\btheta^*$ we have that $\Mo_{\blambda^*}\Big|_{\btheta=\btheta^*}$ plays the role of optimal approximate joint measurement.
\end{remark}

The upper bound in \eqref{I<} is surely non tight, as it has been obtained by starting from the uniform distribution on the sphere; this can be checked in the explicit cases of small spins given below. However, the role of this bound is at least to say that, when we have a device information loss greater than that, the approximating measurement is not optimal.

By the fact that the device information loss of a covariant approximation is a maximum and has the form \eqref{Delta+S}, we have immediately the following formulation of the MURs for covariant approximate spin measurements.
\begin{remark}[MURs for covariant measurements, second version]\label{rem:genMUR}
The state independent MURs are
\begin{equation}
\forall \Mo\in \Mscr(\Acal_\infty)\quad \forall \Ao_{\bi n}\in\Acal_\infty \qquad \exists \rho \in \Sscr_s  : \ S(\Ao_{\bi n}^\rho\|\Mo^\rho_{[\bi n]})\geq I_s[\Acal_\infty\|\Mscr(\Acal_\infty)]>0;
\end{equation}
such a state $\rho$ is one of the eigen-projections of $\bi n \cdot \bi S$.
\end{remark}

So, in a physical approximate joint measurement $\Mo$ of all the spin components $\Ao_{\bi n}$, $\bi n \in \mathbb S_2$, the loss of information $S\big(\Ao_{\bi n}^\rho\big\|\Mo^\rho_{[\bi n]}\big)$ per direction $\bi n$ can not be arbitrarily reduced. It depends on the state $\rho$ and on the direction $\bi n$, but for every $\bi n$ it can be potentially as large as $I_s[\Acal_\infty\|\Mscr(\Acal_\infty)]$.

We shall compute analytically the minimum information loss in the cases of $s=1/2,\, 1,\, 3/2$. For higher spins, a numerical approach is possible, as the computation has been reduced to the optimization problem \eqref{I+q} over a finite number of real parameters, appearing in integrals \eqref{q+d^2} of known polynomials related to the Wigner small d-matrix (\ref{W-d-m}).

\subsection{Minimum information loss and noisy versions of the target observables}\label{il+nv}

In Section \ref{noise+C} we have seen that the approximating spin components $\Mo_{\blambda, [\bi n]}$ are noisy versions \eqref{Mo-nv} of the target spin components $\Ao_{\bi n}$ with visibility $\eta_{\blambda,\,\btheta}$ \eqref{etavis} and noise $\No^{\blambda,\btheta}_{\bi n}(m)$ \eqref{Nstr}. The visibility \eqref{etavis} was already chosen to be maximal with $\Mo_{\blambda, [\bi n]}$ fixed. Now, we can maximize the visibility also with respect to the class of joint measurements $\Mscr(\Acal_\infty)$ by defining
\begin{equation}\label{mvis}
\eta^*_s:=\sup_{\blambda, \btheta}\eta_{\blambda, \btheta}.
\end{equation}
By comparing this quantity with the result \eqref{I+q} we get $K_s=\eta^*_s$ and
\begin{equation}\label{I+qII}
I_s[\Acal_\infty\|\Mscr(\Acal_\infty)]=\log \frac 1 {\eta^*_s}.
\end{equation}

This equation gives a simple relation between the maximal visibility \eqref{mvis}, \eqref{etavis} and the minimum information loss \eqref{def:Is}, \eqref{I+q} in the case of the spin vector. By our construction,  we have also obtained that, inside the class of covariant measurements $\Mscr(\Acal_\infty)$, to maximize the visibility or to optimize the information loss gives the same optimal measurement. Let us  note that this result is due to the fact that the target observable and the approximating POVM are jointly diagonal.

Our aim in introducing the device information loss \eqref{sie} and the minimum information loss \eqref{def:Is}, \eqref{def:Is2} was to have uncertainty measures, based on information theory, by which MURs could be expressed in a simple way, Sect.\ \ref{sec:MUR}; this construction produced also an incompatibility measure, the minimum information loss.
The result above gives a link with the \emph{robustness measures} \cite{BLPY,HSTZ14,Haap15,HKR15,DFK19,HeiMZ16} which quantify the incompatibility by maximizing the visibility; in other terms, these measures are based on the ability of the target observables to maintain incompatibility against noise.

\subsection{Spin 1/2}\label{sec:ind12}
In this case no free parameter comes out from the angle discretization and the approximate spin components \eqref{kmarg1/2} are very simple.

\begin{theorem}
The device information loss \eqref{sie} and the minimum information loss \eqref{def:Is} turn out to be given by
\begin{equation}\label{Delta_12}
\Delta_{1/2}[\Acal_\infty\|\Mo_{\blambda}]=\log\frac 4{1+2\lambda_{1/2}},
\end{equation}
\begin{equation}\label{I1/2_tutte}
I_{1/2}[\Acal_\infty\|\Mscr(\Acal_\infty)]=S\big(\Zo^{\rho_m}\big\| \Mo_{1/2,[\bi k]}^{\rho_m}\big)=\log \frac 43\simeq 0.415037,
\end{equation}
where $\rho_m=\Zo(m)$.

The first equality in \eqref{I1/2_tutte} shows that $\Mo_{1/2}$ is the optimal measurement in the sense of Remark \ref{optimalMo}; its  marginal in direction $\bi n$ is
\begin{equation}\label{opt1/2}
\Mo_{1/2,[\bi n]}(m)=\frac 12\left[\frac \openone2 +\Ao_{\bi n}(m)\right]=\frac 34\,\Ao_{\bi n}(m)+\frac 14\, \Ao_{\bi n}(-m),
\end{equation}
which is an unbiased noisy version of $\Ao_{\bi n}$ (cf.\ Sect.\ \ref{joint1/2}).
\end{theorem}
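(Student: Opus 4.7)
The plan is to specialise the general results of Theorems \ref{prop:infloss} and \ref{Iprop} to the particularly transparent case $s=1/2$. Here the discretisation angles \eqref{thetas} are forced to be $0,\pi/2,\pi$ with no free parameter (Sect.\ \ref{joint1/2}), the convex weights reduce to a single free parameter $\lambda_{1/2}\in[0,1]$ (with $\lambda_{-1/2}=1-\lambda_{1/2}$), and the explicit form \eqref{Mk1/2} of $\Mo_{\blambda,[\bi k]}(m)$ makes the key quantities in \eqref{Delta+q} computable by direct algebra, with no integration needed.

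First I evaluate $\sum_\ell \lambda_\ell q_{\btheta}(m|\ell,m)$, which by Definition \ref{def:q-coeff} equals $\Mo_{\blambda,[\bi k]}^{\rho_m}(m)=\Tr\{\Zo(m)\Mo_{\blambda,[\bi k]}(m)\}$. Using \eqref{Mk1/2} together with $\Tr\{\Zo(m)\}=1$ and $\Tr\{\Zo(m)S_z\}=m$, this quantity is $\frac12+2m^2(\lambda_{1/2}-\frac12)$. Since $m^2=1/4$ for both $m=\pm 1/2$, the expression is independent of $m$ and equals $(1+2\lambda_{1/2})/4$. The minimum over $m$ in \eqref{Delta+q} is therefore trivial, and inverting and taking the logarithm yields \eqref{Delta_12}.

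To obtain the minimum information loss I minimise \eqref{Delta_12} over $\lambda_{1/2}\in[0,1]$. Since the map $\lambda_{1/2}\mapsto \log\frac{4}{1+2\lambda_{1/2}}$ is strictly decreasing, the infimum is attained at $\lambda_{1/2}=1$, giving $I_{1/2}[\Acal_\infty\|\Mscr(\Acal_\infty)]=\log(4/3)$; this also identifies $\blambda^{*}=(0,1)$ as the optimiser in the sense of Remark \ref{optimalMo}.

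Finally I identify the optimal POVM. Setting $\lambda_{1/2}=1$ in \eqref{Mk1/2} gives $\Mo_{1/2,[\bi k]}(m)=\openone/2+mS_z$, and using the spectral resolution $S_z=\frac12[\Zo(1/2)-\Zo(-1/2)]$ together with $\openone=\Zo(m)+\Zo(-m)$ rewrites this as $\frac34\Zo(m)+\frac14\Zo(-m)=\frac12\bigl[\openone/2+\Zo(m)\bigr]$; the analogous identity in direction $\bi n$ follows from the covariance \eqref{Mn}, yielding \eqref{opt1/2}. The first equality in \eqref{I1/2_tutte} is then \eqref{Delta+S} evaluated at $\blambda=\blambda^{*}$, and the numerical value is obtained by the direct computation $S(\Zo^{\rho_m}\|\Mo_{1/2,[\bi k]}^{\rho_m})=1\cdot\log(1/(3/4))=\log(4/3)$. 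The main feature of this argument is its simplicity: the class $\Mscr(\Acal_\infty)$ is a one-parameter family for $s=1/2$, with a strictly decreasing objective, so no genuine obstacle arises.
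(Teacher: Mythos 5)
Your proposal is correct and follows essentially the same route as the paper: both specialise \eqref{Delta+q} to $s=1/2$, observe that $\sum_\ell\lambda_\ell q(m|\ell,m)=(1+2\lambda_{1/2})/4$ is independent of $m$ (the paper reads this off from the explicit coefficients \eqref{q1/2}, you recover it by tracing $\Zo(m)$ against \eqref{Mk1/2}, which is the same computation in reverse), and then minimise the strictly decreasing function of $\lambda_{1/2}$ at the endpoint $\lambda_{1/2}=1$. The identification of the optimal marginal \eqref{opt1/2} and the final value $\log(4/3)$ also match the paper's argument.
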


\begin{proof}
In this case, by \eqref{q1/2} we have $q(m|\ell,m)=\frac{1+\ell}2$, independent of $m$; then, \eqref{Delta_12} follows from \eqref{Delta+q}.

Directly from the definition \eqref{def:Is} and the expression \eqref{Delta_12} we have
\begin{equation*}
I_{1/2}[\Acal_\infty\|\Mscr(\Acal_\infty)]=\inf_{\lambda_{1/2}\in[0,1]} \Delta_{1/2} [\Acal_\infty\|\Mo_{\blambda}] =  \Delta_{1/2} [\Acal_\infty\|\Mo_{1/2}],
\end{equation*}
and the final expressions in \eqref{I1/2_tutte} follow.

By the facts that there is no freedom in the choice of the $\theta$'s and that the infimum is reached for $\lambda_{1/2}=1$, we get that $\Mo_{ 1/2}$ is the optimal measurement. Then, by \eqref{kmarg1/2} we get the form of the marginal \eqref{opt1/2}.
\end{proof}

In \eqref{opt1/2} we have written the marginal of the optimal measurement in two different ways. Firstly, we have written the noisy version with classical noise, with visibility $1/2$. Then, we have used the expression \eqref{Mo-nv} with general noise and visibility $3/4$; it is this last visibility which is related to our minimum information loss, see \eqref{I+qII}.

Let us remark that, actually, $\Mo_{1/2}$ enjoys a useful additional property. By using the state representation \eqref{state1/2} and the explicit expressions \eqref{prob1/2} for the probabilities, we have
\begin{equation}\label{ent1/2}
S\big(\Ao_{[\bi n]}^\rho\big\|\Mo_{\blambda,[\bi n]}^\rho\big)=s\Big(\lambda_{1/2}- 1/2,\, \bi n\cdot \bi r\Big) ,
\end{equation}
\begin{equation}\label{s(c)}
s(c,x):=\frac{1+x}2\,\log \frac{1+x}{1+c x} + \frac{1-x}2\,\log \frac{1-x}{1-c x}, \qquad \abs c <1, \quad \abs x \leq 1.
\end{equation}
The parameter $\bi r$ is the Bloch vector characterizing the state $\rho$. By taking the $c$-derivative, we see that it is strictly negative,
which implies that $s(c,x)$ decreases when $c$ increases. This means that $\Mo_{ 1/2}$ minimizes \eqref{ent1/2} for any state $\rho$. This peculiarity of the case $s=1/2$ makes possible to state that $\Mo_{1/2}$ is optimal even when we know the system state $\rho$ and to  easily formulate also  a form of state dependent MURs.

\begin{remark}[State dependent MURs]\label{stdep1/2}
The following state dependent bound holds:
\begin{equation}\begin{split}
S\big(\Ao_{\bi n}^\rho\big\| \Mo_{[\bi n]}^\rho\big)\geq S\big(\Ao_{\bi n}^\rho\big\| \Mo_{1/2,[\bi n]}^\rho\big)=\sum_{\epsilon=\pm 1} \frac{1+\epsilon \bi n\cdot \bi r}2\,\log \frac{1+\epsilon \bi n\cdot \bi r}{1+\frac\epsilon 2\,\bi n\cdot \bi r}\,,
\\
\forall \rho\in \Sscr_s, \qquad \forall \Mo\in \Mscr(\Acal_\infty), \qquad \forall \bi n\in\Rbb^3, \quad \abs{\bi n}=1.\end{split}
\end{equation}
\end{remark}

\subsection{Spin 1}\label{sec:ind1}
In this case there is a single parameter \eqref{s1_a} coming from the angle discretization; then, the minimum information loss and the optimal measurement can be computed.
\begin{theorem}\label{prop:s1} Let us set $\rho_m=\Zo(m)$; then,
\begin{equation}\label{I1_tutte}
I_1[\Acal_\infty\|\Mscr(\Acal_\infty)]=  S\big(\Zo^{\rho_m}\big\|\Mo^{\rho_m}_{1,[\bi k]}\big)\Big|_{ a=a_0} =\log \frac 2{a_0\left(3-a_0^2\right)}\simeq 0.682505.
\end{equation}
The quantity $a_0$ is the real solution of the equation
\begin{equation}\label{a-eq}
a^3-a^2-5a+\frac 73=0,
\end{equation}
which is given by
\begin{equation}\label{a-properties}
a_0=\frac 13 \left(1+8\cos \alpha\right), \qquad \cos(3\alpha -\pi)=\frac 18, \qquad \alpha\in(0,\pi/2).
\end{equation}
This gives also
\begin{equation}\label{a-properties2}
a_0\simeq 0.444703; \qquad \cos 3\alpha=-\frac 18,  \qquad (\cos\alpha)^3=\frac 14 \left(3\cos\alpha -\frac 18\right).
\end{equation}

The optimal measurement is $\Mo_1\big|_{a=a_0}$ and its  marginal along
$\bi k$  is given by
\begin{equation}\label{opt_1}\begin{split}
&\Mo_{ 1,[\bi k]}(m)\big|_{a=a_0}=\eta^*_1\Zo(m) +(1-\eta^*_1) \No^*_{\bi k}(m),
\\
&\No^*_{\bi k}(0)=(1-\kappa)\left[\Zo(1)+\Zo(-1)\right], \qquad \No^*_{\bi k}(\pm1)=\frac 12\, \Zo(0)+\kappa \Zo(\mp 1)
\\
&\eta^*_1=\frac {a_0} 2 \left(3-a^2_0\right)
\simeq 0.623083, \qquad 0<\kappa = \frac{(1-a_0)^3}{4(1-3a_0+a_0^3)}
<1.\end{split}
\end{equation}

\end{theorem}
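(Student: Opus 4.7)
The plan is to apply the characterization \eqref{Delta+q} of the device information loss in Theorem \ref{Iprop} and turn the computation of $I_1$ into a finite-dimensional min--max problem over $\blambda$ and the single discretization parameter $a = \cos\theta_1 \in (0,1)$. Writing $f_m(\blambda,a) := \sum_{\ell}\lambda_\ell q_{\btheta}(m|\ell,m)$, we have $I_1[\Acal_\infty\|\Mscr(\Acal_\infty)] = -\log K_1$ with $K_1=\sup_{\blambda,a}\min_{m\in\Xscr} f_m$. The symmetry $q_{\btheta}(m|\ell,h)=q_{\btheta}(-m|\ell,-h)$ from \eqref{qpos} gives $f_{-1}=f_1$, so only $f_0$ and $f_1$ matter. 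Reading the coefficients off \eqref{M11}--\eqref{M00} yields
\begin{equation*}
f_0 = (\lambda_1+\lambda_{-1})\,\tfrac{a(3-a^2)}{2}+\lambda_0 a^3,\qquad
f_1 = \lambda_1 A(a)+\lambda_0 B(a)+\lambda_{-1} C(a),
\end{equation*}
with $A(a)=1-\tfrac{(1+a)^3}{8}$, $B(a)=\tfrac{(2+a)(1-a)^2}{4}$, $C(a)=\tfrac{(1-a)^3}{8}$.

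The next step is to show that the supremum is attained at the corner $\lambda_1 = 1$, $\lambda_0 = \lambda_{-1} = 0$, reducing the problem to one dimension. Since $A(a)-C(a) = 3(1-a^2)/4 > 0$ on $(0,1)$, for fixed $\lambda_0$ and fixed sum $\mu = \lambda_1+\lambda_{-1}$, the function $f_1$ is strictly increasing in $\lambda_1-\lambda_{-1}$ while $f_0$ is unchanged; hence the minmax is improved by pushing $\lambda_{-1}$ to $0$. With $\lambda_{-1}=0$, a direct computation gives $\partial_{\lambda_0} f_0 = 3a(a^2-1)/2 < 0$ and $\partial_{\lambda_0} f_1 = B(a)-A(a) = 3(a-1)(a+1)^2/8 < 0$, so \emph{both} $f_0$ and $f_1$ decrease with $\lambda_0$, forcing $\lambda_0 = 0$ at the optimum. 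This monotonicity argument on the simplex is the most delicate part of the proof and is the main obstacle, because the reduction is not implied by pure symmetry considerations: indeed the post-processing breaks the $\ell\leftrightarrow-\ell$ symmetry (the value $m=s$ is assigned to directions close to $+\bi k$, not to $-\bi k$).

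With $\lambda_1 = 1$, the one-dimensional problem reads $K_1 = \sup_{a\in(0,1)}\min\bigl\{\tfrac{a(3-a^2)}{2},\,1-\tfrac{(1+a)^3}{8}\bigr\}$. The first function is strictly increasing (derivative $3(1-a^2)/2$) from $0$ to $1$, and the second is strictly decreasing from $7/8$ to $0$; hence they cross at a unique $a_0\in(0,1)$ which is the optimizer. Equating $f_0(a)=f_1(a)$ and clearing denominators gives $3a^3-3a^2-15a+7=0$, which is \eqref{a-eq}. The standard depressed-cubic substitution $a=y+\tfrac13$ yields $y^3-\tfrac{16}{3}y+\tfrac{16}{27}=0$, whose discriminant is positive (three real roots); the trigonometric formula then produces $y=\tfrac{8}{3}\cos\alpha$ with $\cos(3\alpha)=-\tfrac18$, equivalently $\cos(3\alpha-\pi)=\tfrac18$, and choosing the branch in $(0,1)$ recovers \eqref{a-properties}--\eqref{a-properties2}. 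Substituting back gives $K_1 = a_0(3-a_0^2)/2 = \eta^*_1$ and $I_1 = \log\bigl(2/(a_0(3-a_0^2))\bigr)$.

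For the optimal measurement it suffices to apply formulas \eqref{Mo-nv}--\eqref{Nstr} with $\lambda_1 = 1$ and $a = a_0$. At the optimizer we have $f_0(a_0) = f_1(a_0) = \eta_{\blambda,\btheta}$, so for every $m\in\Xscr$ the coefficient $\sum_\ell q_{\btheta}(m|\ell,m)\lambda_\ell - \eta^*_1$ appearing in front of $\Zo(m)$ in \eqref{Nstr} vanishes, and the noise $\No^*_{\bi k}$ contains only the off-diagonal contributions $q_{\btheta}(m|1,h)$ for $h\neq m$, read off from \eqref{M11}. Using $(2+a_0)(1-a_0)^2/2 = 1-\eta^*_1$, an identity equivalent to $f_1 = f_0$ at $a_0$, one sees that the $\Zo(0)$-coefficient in $\No^*_{\bi k}(\pm 1)$ normalizes to $1/2$, leaving a single residual parameter $\kappa$ proportional to $(1-a_0)^3$; this gives exactly the decomposition \eqref{opt_1}. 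Positivity of all coefficients and the bounds $0<\kappa<1$, $0<\eta^*_1<1$ are then immediate from $a_0\in(0,1)$ and the strict positivity \eqref{qpos} of the $q$-coefficients.
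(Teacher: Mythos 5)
Your proposal is correct and follows essentially the same route as the paper's proof: reduce via \eqref{Delta+q}--\eqref{I+q} to a min--max over $(\blambda,a)$, show the supremum over the simplex is attained at $\lambda_1=1$ because that coefficient dominates in both $f_0$ and $f_1$ (the paper states this as ``both expressions have an absolute maximum in $\lambda_+=1$''), and then locate the unique crossing of the increasing function $a(3-a^2)/2$ with the decreasing function $1-(1+a)^3/8$, which yields the cubic \eqref{a-eq} and its trigonometric solution. As a side benefit, your derivation of the noise term gives $\kappa=C(a_0)/(1-\eta^*_1)=(1-a_0)^3/\big(4(2-3a_0+a_0^3)\big)\simeq 0.0568$, which indicates that the denominator printed in \eqref{opt_1} contains a typo, since $1-3a_0+a_0^3$ is negative.
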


\begin{proof}
From \eqref{q-1} we have
\begin{equation*}
\sum_{\ell}\lambda_{\ell}q_a(\pm 1|\ell,\pm 1)=\lambda_+\left[1-\frac{(1+a)^3}8\right]+\lambda_-\,\frac{(1-a)^3}8+\lambda_0\,\frac{(2+a) (1-a)^2}4,
\end{equation*}
\[
\sum_{\ell}\lambda_{\ell}q_a(0|\ell,0)=\left(\lambda_++\lambda_-\right)\frac a 2 \left(3-a^2\right)+\lambda_0a^3.
\]
One can check that both these expressions have an absolute maximum in $\lambda_+=1$ for all $a\in (0,1)$. Then, \eqref{I+q} gives
\begin{equation*}
K_1\leq \sup_{a\in(0,1)}\min_m\sup_{\blambda}\sum_{\ell}\lambda_{\ell}q_a(m|\ell,m)= \sup_{a\in(0,1)}\min_mq_a(m|1,m).
\end{equation*}
On the other side, by eliminating the supremum over the $\lambda$'s and choosing $\lambda_{\ell}=\delta_{\ell,1}$ in \eqref{I+q}, we get
$
K_1\geq  \sup_{a\in(0,1)}\min_mq_a(m|1,m)$; so, the equality holds and we have
\[
K_1=\sup_a\min_mq_a(m|1,m)=\sup_a\min \left\{ 1-\frac{(1+a)^3}8,\; \frac a 2 \left(3-a^2\right)\right\}.
\]

The first term in the minimum decreases with $a$ and the second one increases; this means that the supremum over $a$ is reached when these two terms are equal, which happens when \eqref{a-eq} holds. This proves \eqref{I1_tutte}. It is possible to check that \eqref{a-properties} is the unique real solution of \eqref{a-eq} and that this gives the properties \eqref{a-properties2}.

Equation \eqref{I1_tutte} implies also that the optimal measurement is $\Mo_1\big|_{a=a_0}$. By inserting  $a_0$ into the expression \eqref{M11} of its $\bi k$-marginal we get the expressions \eqref{opt_1}.
\end{proof}

\begin{remark}
Differently from the case $s=1/2$, for $s=1$ the marginal  $\Mo_{1,[\bi k]}\big|_{a=a_0}$ of the optimal measurement  is not unbiased because $a_0\neq 1/3$. Indeed, on the maximally mixed state $\rho_0$, the relative entropy is not zero and its value is
\[
S\big(\Ao_{[\bi n]}^{\rho_0}\big\|\Mo_{1,[\bi n]}^{\rho_0}\big)_{a=a_0}=\frac 23\log \frac 6{4+3a_0(1-a_0)} +\frac 13\log \frac 6{3a_0(10-a_0)-5}\simeq 0.103607.
\]
\end{remark}

\subsection{Spin 3/2}\label{sec:ind32}
\begin{theorem} \label{theor:3/2} Let us set $\rho_m=\Zo(m)$; then, we have
\begin{equation}\label{Iall3/2}
I_{3/2}[\Acal_\infty\|\Mscr(\Acal_\infty)]=S\big( \Zo^{\rho_m}\big\|\Mo_{3/2,[\bi k]}^{\rho_m}\big)_{a=a_0}=\log \frac {32}{ 45 -24 a_0-24a_0^2 -8a_0^3}\simeq 0.88615563;
\end{equation}
$\Mo_{3/2}\big|_{a=a_0}$ is the optimal measurement. The quantity $a_0$ is the unique real solution in $(0,1)$ of the equation
\begin{equation}\label{a-eq3/2}
a^4 -6 a^2-8 a+\frac{15}2=0,
\end{equation}
which gives
\begin{equation}\label{a03/2value}
a_0\simeq 0.6461537831.
\end{equation}

\end{theorem}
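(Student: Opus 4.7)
The plan is to follow the same strategy as in the proof of Theorem~\ref{prop:s1} for spin~1: reduce the two-stage optimization in \eqref{I+q} to a one-parameter problem in $a=\cos\theta_1$, and then locate the critical $a_0$ by equating two monotone functions. Concretely, by \eqref{I+q},
\[
I_{3/2}[\Acal_\infty\|\Mscr(\Acal_\infty)]=\log (K_{3/2})^{-1},\qquad K_{3/2}=\sup_{\blambda,\,a}\min_{m\in\Xscr}\sum_{\ell\in\Xscr}\lambda_\ell \,q_a(m|\ell,m),
\]
and by the symmetry $q_a(m|\ell,m)=q_a(-m|\ell,-m)$ from \eqref{qpos} it is enough to take the minimum over $m\in\{1/2,\,3/2\}$.

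The key step is to show that for every fixed $a\in(0,1)$ and every positive $m$, the coefficient of $\lambda_{3/2}$ is strictly larger than the coefficients of $\lambda_{1/2}$, $\lambda_{-1/2}$, $\lambda_{-3/2}$, i.e.\
\[
q_a(m|3/2,m)>q_a(m|\ell,m)\qquad\forall\,\ell\in\{-3/2,-1/2,1/2\},\quad m\in\{1/2,3/2\}.
\]
If this holds, then the inner $\sup_{\blambda}$ (over the simplex) is attained simultaneously at $\lambda_{3/2}=1$ for both values of $m$, and therefore
\[
K_{3/2}=\sup_{a\in(0,1)}\min\!\Big(\,q_a(1/2|3/2,1/2),\;q_a(3/2|3/2,3/2)\,\Big).
\]
This inequality would be checked by inserting the explicit expressions of the spin-$3/2$ $q$-coefficients from \ref{app:spin3/2} (which are polynomials of degree at most $6$ in $a$ obtained from $|d^{(3/2)}_{\ell,h}(\theta)|^2$ via \eqref{q+d^2}); the required dominance reduces to sign analysis of low-degree polynomials on $(0,1)$, and this is the main algebraic obstacle since four cases per value of $m$ must be verified, not a single clean inequality as for $s=1$.

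Once reduced to the one-variable problem, $q_a(3/2|3/2,3/2)$ is manifestly increasing in $a$ (its integrand is concentrated near $\theta=0$, where $|d^{(3/2)}_{3/2,3/2}|^2$ is largest, and the interval $[0,\theta_1]$ grows with $a$), while $q_a(1/2|3/2,1/2)$ is decreasing. Hence the supremum over $a$ is attained at the unique crossing point $a_0\in(0,1)$ solving
\[
q_a(1/2|3/2,1/2)=q_a(3/2|3/2,3/2).
\]
I would now carry out this algebraic equality: using the explicit polynomials, it simplifies (after multiplying out and cancelling) to the quartic $a^4-6a^2-8a+\tfrac{15}{2}=0$, which is \eqref{a-eq3/2}. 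Uniqueness in $(0,1)$ follows from the monotonicities above, or equivalently from a Sturm/sign analysis of the quartic on $[0,1]$: the left-hand side equals $15/2>0$ at $a=0$ and $-13/2<0$ at $a=1$, and its derivative $4a^3-12a-8$ is negative on $(0,1)$, giving a single root, numerically $a_0\simeq 0.6461537831$.

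The final value is then obtained from \eqref{Delta+S} evaluated at $\rho_m=\Zo(m)$ with the optimal $\lambda_{3/2}=1$ and $a=a_0$:
\[
I_{3/2}[\Acal_\infty\|\Mscr(\Acal_\infty)]=S\big(\Zo^{\rho_m}\|\Mo_{3/2,[\bi k]}^{\rho_m}\big)_{a=a_0}=\log (K_{3/2})^{-1}=\log\frac{1}{q_{a_0}(3/2|3/2,3/2)},
\]
and a direct substitution of $a_0$ in the explicit polynomial form of $q_a(3/2|3/2,3/2)$ from \ref{app:spin3/2} yields the claimed expression $\log\tfrac{32}{45-24a_0-24a_0^2-8a_0^3}\simeq 0.88615563$. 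The identification of the optimal measurement as $\Mo_{3/2}|_{a=a_0}$ is then automatic from the fact that the supremum in $K_{3/2}$ is attained exactly there.
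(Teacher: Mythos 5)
Your overall strategy is the same as the paper's: reduce \eqref{I+q} to $K_{3/2}=\sup_a\min_m q_a(m|3/2,m)$ by showing that the linear functional $\blambda\mapsto\sum_\ell\lambda_\ell q_a(m|\ell,m)$ is maximized at $\lambda_{3/2}=1$ simultaneously for $m=1/2$ and $m=3/2$, then locate the optimum at the crossing of one increasing and one decreasing function of $a$, derive the quartic \eqref{a-eq3/2}, and substitute back. The paper does exactly this, reading off the dominance of $\ell=3/2$ and the monotonicities directly from the explicit polynomials in \ref{app:spin3/2}.

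However, your monotonicity claims are reversed, and the heuristic you give for them is wrong. Since $a=\cos\theta_1$, increasing $a$ \emph{shrinks} the interval $[0,\theta_1)$ over which $q_a(3/2|3/2,3/2)$ is integrated (and enlarges $[\theta_1,\pi/2)$ for $m=1/2$); hence $q_a(3/2|3/2,3/2)=\frac1{16}\left(15-4a-6a^2-4a^3-a^4\right)$ \emph{decreases} from $15/16$ to $0$, while $q_a(1/2|3/2,1/2)=\frac1{16}\left(12a+6a^2-4a^3-3a^4\right)$ \emph{increases} from $0$ to $11/16$, as one checks from the polynomials (e.g.\ the derivative of the latter is $\frac{12}{16}(1+a)(1-a^2)\geq0$). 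Because the crossing-point argument is symmetric in which branch increases and which decreases, your conclusion survives, but as written the justification would not. Two smaller points: the quartic evaluates to $-11/2$ (not $-13/2$) at $a=1$, though the sign, and hence the root count, is unaffected; and the dominance inequalities $q_a(m|3/2,m)>q_a(m|\ell,m)$ that you defer to ``sign analysis'' are indeed the step that must be carried out with the explicit coefficients \eqref{q3/2a}--\eqref{q3/2d}, exactly as the paper does implicitly.
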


\begin{proof}
From \ref{app:spin3/2} we get
\[
\max_{\ell}q_a(\pm3/2|\ell,\pm3/2)=q_a(\pm3/2|3/2,\pm3/2)=\frac 1{16}\left(15-4a- 6a^2-4a^3- a^4\right),
\]
a quantity which decreases with $a$ from $\frac {15}{16}$ to 0, and
\[
\max_{\ell}q_a(\pm 1/2|\ell,\pm 1/2)=q_a(\pm 1/2|3/2,\pm 1/2)=\frac 1{16} \left( 12a+ 6a^2-4a^3-3 a^4\right),
\]
a quantity which increases with $a$ from $0$ to $\frac {11}{16}$. Then, as in the proof of Theorem \ref{prop:s1}, we get
\begin{multline*}
K_{3/2}=\sup_{a\in(0,1)} \min_m q_a(m|3/2,m)\\ {}=\sup_{a\in(0,1)}\frac 1{16}\min\left\{15-4a- 6a^2-4a^3- a^4,\,12a+ 6a^2-4a^3-3 a^4\right\}.
\end{multline*}
By equating these two expressions we get equation \eqref{a-eq3/2}, whose solution \eqref{a03/2value} is computed numerically.
As we have
\[
\min_mq_a(m|3/2,m)=\begin{cases}q_a(\pm 1/2|3/2,\pm 1/2) & \text{for} \ a\leq a_0,
\\ q_a(\pm 3/2|3/2,\pm 3/2) & \text{for}\  a\geq a_0,\end{cases}
\]
\eqref{I+q} gives
\begin{equation*}
I_{3/2}[\Acal_\infty\|\Mscr(\Acal_\infty)]=\log \left(q_{a_0}(m|3/2,m)\right)^{-1};
\end{equation*}
by using also \eqref{a-eq3/2}, the final expression in \eqref{Iall3/2} follows.
By Theorem \eqref{Iprop}, the optimal measurement is identified and the intermediate expression in \eqref{Iall3/2} follows.
\end{proof}

By comparing \eqref{I+qII} and \eqref{Iall3/2}, we have that the optimal visibility is
\[
\eta^*_{3/2}=\frac14\left(\frac{45}8 -3a_0-3a_0^2-a_0^3\right)\simeq 0.541054
\]
with $a_0$ given in Theorem \ref{theor:3/2}. Also the expression of the optimal noise could be obtained, but it would be involved and we do not give explicitly here.

By direct computations one can check that
the optimal measurement is biased and that on the maximally mixed state $\rho_0$ it gives
\begin{equation}\label{bias3/2}
S\big( \Zo^{\rho_0}\big\|\Mo_{3/2,[\bi k]}^{\rho_0}\big)_{a=a_0}=\frac 12 \,\log [4a_0(1-a_0)]^{-1}\simeq 0.0644281.
\end{equation}

\begin{remark}\label{order1}
The results  we have found for small spin values give
\begin{equation}\label{<<<}
0<I_{1/2}[\Acal_\infty\|\Mscr(\Acal_\infty)]<I_{1}[\Acal_\infty\|\Mscr(\Acal_\infty)]<I_{3/2}[\Acal_\infty\|\Mscr(\Acal_\infty)].
\end{equation}
This chain of inequalities suggests the conjecture that $I_{s}[\Acal_\infty\|\Mscr(\Acal_\infty)]$ could grow with $s$: in some sense the minimum  information loss grows with the complexity of the spin system.
\end{remark}

\section{MURs for two and three orthogonal components}\label{sec:2+3ort}
In this section we study the MURs for the cases of two and three orthogonal spin components. As remarked in \cite{DamSW15}, it is not possible to get the case of infinite components from the case of three orthogonal components; only the case of infinite components respects the rotation symmetry, while in the other case the three directions are fixed.  The cases of orthogonal components involve less symmetries and there is more freedom in the construction of the approximate joint measurements; so it is meaningful to enlighten the differences between the case of the spin components in all directions and the case of orthogonal components. In principle also a few non-orthogonal components could be considered; in \cite{BGT18} we already considered two non-orthogonal spin components with $s=1/2$, but with the sum of relative entropies as starting point.

The cases of orthogonal components allow to show how the minimum information loss and the related MURs can be introduced also for other sets of observables by adapting the construction of Section \ref{sec:newind}.
Moreover, the minimum information loss can be used as quantification of the incompatibility of the target observables and allows to compare different sets of observables. In the cases of spin components we shall obtain orderings for different numbers of target observables and different values of $s$, which are not at all trivial or intuitive.

\subsection{Target observables and approximate joint measurements}\label{sec:2+3symm}
The first set of target observables we consider is $\Acal_3=\{\Xo,\Yo,\Zo\}$, which is covariant with respect to the octahedron group $O$, see \ref{app:3orth}. Then, $\Mscr(\Acal_3)$ is the set of observables with value space $\Xscr^3$ and $O$-covariant in the sense of \eqref{covMo3}. By using the notation \eqref{gen_marg} and the covariance properties \eqref{--}, \eqref{symm_marg}, \eqref{covMo3} we have that
\begin{equation}\label{infty_sub3}
\Mo\in\Mscr(\Acal_\infty) \quad \Rightarrow \quad \Mo_{[\bi i,\bi j,\bi k]}\in \Mscr(\Acal_3).
\end{equation}

The other set of target observables is $\Acal_2=\{\Xo,\Yo\}$, which is covariant with respect to the  dihedral group $D_4$, see \ref{app:2orth}. Then, $\Mscr(\Acal_2)$ is the set of observables with value space $\Xscr^2$ and $D_4$-covariant in the sense of \eqref{covMo2}. By using the notation \eqref{gen_marg} and the covariance properties \eqref{covMo2}, \eqref{covMo3} we have that
\begin{equation}\label{3_sub2}
\Mo\in\Mscr(\Acal_3) \quad \Rightarrow \quad \Mo_{[\bi i,\bi j]}\in \Mscr(\Acal_2).
\end{equation}

Note that the implications above are one-sided: there are elements in $\Mscr(\Acal_2)$ which are not marginals of elements in $\Mscr(\Acal_3)$ and the same for $\Mscr(\Acal_3)$ with respect to $\Mscr(\Acal_\infty)$.

We obtained the explicit form of a covariant approximate joint measurement, for two and three orthogonal components, only in the case of a spin 1/2. For a generic spin $s$ we can give only particular covariant approximate joint measurements, such as the ones based on optimal cloning.

\subsubsection{Optimal cloning and approximate joint measurements.}\label{sect:cl}
As approximate joint measurement of the spin components $\Ao_h$, $h=1,\ldots,\mathtt{r}$, a significant multi-observable $\Mo_{\rm cl}\in \Mscr(\Xscr^\mathtt{r})$ can be constructed by using the so called \emph{optimal cloning} \cite{Wer98,KW99,HSTZ14}; its univariate marginals are given by \eqref{eq:maropt}. Let us stress that the marginal of the multi-observable constructed by optimal cloning can be seen as a noisy version of the target observable, with classical noise; however, this decomposition is not unique, as in the case of infinite components.

When the target observables are $\Acal_3=\{\Xo,\Yo,\Zo\}$, we get the multi-observable $\Mo_{\rm cl}^3$, whose univariate marginals \eqref{eq:maropt} take the form
\begin{equation}\label{eq:3cl}
\Mo_{{\rm cl}[i]}^3(m)=\frac 1{3(s+1)}\left[ \openone +\left(s+2\right)\Xo_i(m)\right], \qquad i=1,2,3, \quad m\in\Xscr.
\end{equation}
Obviously $\Mo_{\rm cl}^3\in \Mscr(\Xscr^3)$, but one has also $\Mo_{\rm cl}^3\in \Mscr(\Acal_3)$, as shown in  \ref{app:clon}.

When the target observables are $\Acal_2=\{\Xo,\Yo\}$, the optimal cloning gives the bi-observable $\Mo_{\rm cl}^2\in \Mscr(\Xscr^2)$ and   \eqref{eq:maropt} becomes
\begin{equation}\label{cl2:marg}
\Mo_{{\rm cl}[i]}^2(m)=\frac 1{4(s+1)}\left[ \openone +\left(2s+3\right)\Xo_i(m)\right], \qquad i=1,2,\quad m\in\Xscr.
\end{equation}
Again one has also $\Mo_{\rm cl}^2\in \Mscr(\Acal_2)$, as shown in  \ref{app:clon}.

\subsubsection{Spin 1/2.}\label{1/2joint_ort}

For a spin $1/2$ the explicit expressions of the general element in $\Mscr(\Acal_3)$ and $\Mscr(\Acal_2)$ have been obtained in \cite[Proposition 5, Theorem 10]{BGT18} and used also in \cite{BarG18}.
Then, the most general covariant joint measurement in $\Mscr(\Acal_3)$ \cite[Eq.\ (11)]{BarG18}  can be written as
\begin{equation}\label{3cl12}
\Mo_c(m_1,m_2,m_3)=\frac\openone 8 + \frac c 2 \left(m_1 S_x+m_2 S_y+m_3 S_z\right), \quad \abs c \leq\frac 1 {\sqrt 3}.
\end{equation}
Similarly, the most general element in  $\Mscr(\Acal_2)$  has the expression
\cite[Eq.\ (7)]{BarG18}
\begin{equation}\label{2cl12}
\Mo_c(m_1,m_2)=\frac\openone 4 + c \left(m_1 S_x+m_2 S_y\right), \qquad \abs c \leq\frac 1 {\sqrt 2} .
\end{equation}

\begin{remark} In both the cases of two and three orthogonal components, the univariate marginals have the expression
\begin{equation}\label{marg1/2}
\Mo_{c[i]}(m)=\frac \openone2 +2cmS_i=\begin{cases} c\Xo_i(m)+\left(1-c\right)\frac \openone 2,  &c\geq 0, \\
\abs c \Xo_i(-m)+\left(1-\abs c\right)\frac \openone 2,  &c<0;
\end{cases}
\end{equation}
the only difference is the maximally possible value for $\abs c$: $\abs c \leq 1/\sqrt 3$ in the case of three components and $\abs c \leq 1/\sqrt 2$ in the case of two components. Also
the marginal of the optimal measurement \eqref{opt1/2} for infinite components has the form \eqref{marg1/2} with $c=1/2$.
\end{remark}

\begin{remark}  By particularizing \eqref{eq:3cl} and \eqref{cl2:marg} to $s=1/2$, we obtain that the marginals of the joint measurements from optimal cloning have again the form \eqref{marg1/2} with $c=5/9$ in the case of three components and $c=2/3$ in the case of two components.
As we have $1/2<5/9<1/\sqrt 3<2/3<1/\sqrt 2$, there is an increase of minimum classical noise in going from the case of two orthogonal components, to cloning of two components, three components, cloning of three components, infinite components.
\end{remark}

\subsection{The information loss}

Analogously to what is done in Section \ref{sec:newind}, also in the case of orthogonal spin components it is possible to define the device information loss  and the minimum information loss.
The \emph{device information loss} of $\Mo$ is defined as in \eqref{sie}; then, exactly as for \eqref{ntok}, after the supremum on the states, the covariance implies the independence from the direction. So, we have: for $\mathtt{r}=2,\,3$,
\begin{equation}\label{sie2+3}
\Delta_s[\Acal_\mathtt{r}\|\Mo]:=\sup_{\rho\in\Sscr_s, \; i: i\leq \mathtt{r}}S\left(\Xo_{i}^\rho\|\Mo_{[i]}^\rho\right)=\sup_{\rho\in\Sscr_s}S\left(\Xo_{i}^\rho\|\Mo_{[i]}^\rho\right), \qquad \Mo\in \Mscr(\Acal_\mathtt{r}).
\end{equation}

By optimizing over the approximate joint measurement $\Mo$ we get the \emph{minimum information loss}
\begin{equation}\label{def:Is2+3}
I_s[\Acal_\mathtt{r}\|\Mscr(\Acal_\mathtt{r})]:=\inf_{\Mo\in\Mscr(\Acal_\mathtt{r})}\Delta_s[\Acal_\mathtt{r}\|\Mo]=\inf_{\Mo\in\Mscr(\Acal_\mathtt{r})} \sup_{\rho\in\Sscr_s} S\left(\Xo_{i}^\rho\|\Mo_{[i]}^\rho\right), \qquad \mathtt{r}=2,3.
\end{equation}

As done in Section \ref{sec:mil} and in \cite{RS19,BGT18}, we can extend the previous definitions to non-symmetric approximate joint measurements, without changing the final conclusions. Firstly, we introduce the device information loss for general measurements:
\begin{equation}\label{def:DeltaGEN}
\Delta_s[\Acal_\mathtt{r}\|\Mo]=\sup_{\rho\in\Sscr_s, \; i: i\leq \mathtt{r}}S\left(\Xo_{i}^\rho\|\Mo_{[i]}^\rho\right), \qquad \Mo\in \Mscr(\Xscr^\mathtt{r}), \qquad \mathtt{r}=2,3.
\end{equation}
Obviously, now we cannot eliminate the maximum over the directions as in \eqref{sie2+3}, because this follows from the covariance. Then, we optimize over all these measurements by defining
\begin{equation}\label{def:IsGEN}
I_s[\Acal_\mathtt{r}\|\Mscr(\Xscr^\mathtt{r})]:=\inf_{\Mo\in\Mscr(\Xscr^\mathtt{r})}\Delta_s[\Acal_\mathtt{r}\|\Mo]=\inf_{\Mo\in\Mscr(\Xscr^\mathtt{r})} \sup_{\rho\in\Sscr_s, \; i: i\leq \mathtt{r}} S\left(\Xo_{i}^\rho\|\Mo_{[i]}^\rho\right), \qquad \mathtt{r}=2,3.
\end{equation}

Next proposition shows that this extension does not change the value of the minimum information loss and that this value grows with the increasing complexity of the set of observables, i.e.\ going from $\Acal_2$, to $\Acal_3$, and then to $\Acal_\infty$

\begin{proposition}
\label{prop:2+3}
The two definitions \eqref{def:Is2+3} and \eqref{def:IsGEN} are equivalent, as we have
\begin{equation}\label{I=Ir}
I_s[\Acal_\mathtt{r}\|\Mscr(\Xscr^\mathtt{r})]=I_s[\Acal_\mathtt{r}\|\Mscr(\Acal_\mathtt{r})], \qquad \mathtt{r}=2,3.
\end{equation}

Moreover, the minimum information loss is strictly positive and finite and we have
\begin{equation}\label{A23infty}
0<I_s[\Acal_2\|\Mscr(\Acal_2)] \leq I_s[\Acal_3\|\Mscr(\Acal_3)]\leq I_s[\Acal_\infty\|\Mscr(\Acal_\infty)]<+\infty.
\end{equation}
\end{proposition}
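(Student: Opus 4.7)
My plan has three distinct components, mirroring the structure of Theorem \ref{Iprop}: the symmetrization equality, the monotonicity chain, and the strict positivity/finiteness bounds.

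\textbf{Step 1: The equality \eqref{I=Ir} via symmetrization.} The inclusion $\Mscr(\Acal_\mathtt{r})\subset \Mscr(\Xscr^\mathtt{r})$ gives $I_s[\Acal_\mathtt{r}\|\Mscr(\Xscr^\mathtt{r})] \leq I_s[\Acal_\mathtt{r}\|\Mscr(\Acal_\mathtt{r})]$ immediately. For the reverse direction, I would imitate the proof of \eqref{Deltasymm} but averaging over the relevant finite symmetry group rather than $SO(3)$: the dihedral group $D_4$ for $\mathtt{r}=2$ and the octahedral group $O$ for $\mathtt{r}=3$. Given any $\Mo\in\Mscr(\Xscr^\mathtt{r})$, set
\[
\hat{\Mo}(m_1,\ldots,m_\mathtt{r}):=\frac{1}{|G|}\sum_{g\in G}U(g)\,\Mo\big(g^{-1}\cdot(m_1,\ldots,m_\mathtt{r})\big)\,U(g)^\dagger,
\]
where the action of $g$ on the output tuple is the one making the target family $\{\Xo_i\}$ equivariant (permutation plus possible sign flips). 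Then $\hat\Mo\in \Mscr(\Acal_\mathtt{r})$ by construction, and the covariance of the $\Xo_i$ together with convexity of the relative entropy yields
\[
S\big(\Xo_i^\rho\big\|\hat\Mo^\rho_{[i]}\big)\leq \frac{1}{|G|}\sum_{g\in G}S\big(\Xo_{g^{-1}(i)}^{U(g)^\dagger\rho U(g)}\big\|\Mo^{U(g)^\dagger\rho U(g)}_{[g^{-1}(i)]}\big).
\]
Taking $\sup_{\rho,i}$ under the sum gives $\Delta_s[\Acal_\mathtt{r}\|\hat\Mo]\leq \Delta_s[\Acal_\mathtt{r}\|\Mo]$, hence $I_s[\Acal_\mathtt{r}\|\Mscr(\Acal_\mathtt{r})]\leq I_s[\Acal_\mathtt{r}\|\Mscr(\Xscr^\mathtt{r})]$.

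\textbf{Step 2: The monotonicity chain.} The two middle inequalities in \eqref{A23infty} follow from the fact that restricting the sup to fewer target observables can only decrease the device information loss, combined with the marginalization implications \eqref{infty_sub3} and \eqref{3_sub2}. Specifically, for any $\Mo\in\Mscr(\Acal_3)$, the marginal $\Mo_{[\bi i,\bi j]}$ lies in $\Mscr(\Acal_2)$ and satisfies $\Delta_s[\Acal_2\|\Mo_{[\bi i,\bi j]}]\leq \Delta_s[\Acal_3\|\Mo]$ because the sup on the left is over $i\in\{1,2\}$ while on the right it is over $i\in\{1,2,3\}$. Taking the infimum over $\Mo\in\Mscr(\Acal_3)$ gives $I_s[\Acal_2\|\Mscr(\Acal_2)]\leq I_s[\Acal_3\|\Mscr(\Acal_3)]$. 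The same argument with \eqref{infty_sub3} yields $I_s[\Acal_3\|\Mscr(\Acal_3)]\leq I_s[\Acal_\infty\|\Mscr(\Acal_\infty)]$.

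\textbf{Step 3: Strict positivity and finiteness.} Strict positivity of $I_s[\Acal_2\|\Mscr(\Acal_2)]$ reduces, through Step 1, to strict positivity of $I_s[\Acal_2\|\Mscr(\Xscr^2)]$; this was in fact already established inside the proof of Theorem \ref{Iprop} through the comparison with the entropic incompatibility degree $c_{\rm inc}(\Xo,\Yo)>0$ of \cite{BGT18}, giving $c_{\rm inc}(\Xo,\Yo)\leq 2\, I_s[\Acal_2\|\Mscr(\Xscr^2)]$. The monotonicity chain then propagates strict positivity to $\mathtt{r}=3$ and $\mathtt{r}=\infty$. Finiteness of $I_s[\Acal_\infty\|\Mscr(\Acal_\infty)]$ is already contained in \eqref{I<}, but can also be exhibited directly: the cloning marginals \eqref{eq:3cl} satisfy the eigenvalue bound $\Mo_{{\rm cl}[i]}^3(m)\geq \frac{1}{3(s+1)}\openone$, so $\Mo_{{\rm cl}[i]}^{3,\rho}(m)\geq 1/[3(s+1)]>0$ for every $\rho$ and $m$, bounding the ratio $\Xo_i^\rho(m)/\Mo_{{\rm cl}[i]}^{3,\rho}(m)$ uniformly and yielding a finite $\Delta_s[\Acal_3\|\Mo_{\rm cl}^3]$.

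The main obstacle I anticipate is getting the group-averaging argument of Step 1 to cleanly produce a member of $\Mscr(\Acal_\mathtt{r})$ with the correct covariance law used in \eqref{3_sub2} and \eqref{infty_sub3}: one must verify both that the averaged POVM actually satisfies the discrete covariance (straightforward, by invariance of the Haar-like measure $\frac{1}{|G|}\sum_g$) and that the output-relabeling action $g\cdot (m_1,\ldots,m_\mathtt{r})$ matches the physical action induced by $U(g)$ on the target spin components $\Xo_i$. Apart from this bookkeeping, the rest of the proof is a direct combination of convexity of relative entropy, monotonicity of suprema, and the explicit cloning example.
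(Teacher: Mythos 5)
Your proposal is correct and follows essentially the same route as the paper: group-averaging over $D_4$ (respectively $O$) with convexity of relative entropy for \eqref{I=Ir} (the paper simply delegates this machinery to Theorems 4 and 9 of \cite{BGT18}), the marginalization implications \eqref{infty_sub3}--\eqref{3_sub2} for the monotonicity chain, and the comparison $0<c_{\rm inc}(\Xo,\Yo)\leq 2\,I_s[\Acal_2\|\Mscr(\Acal_2)]$ for strict positivity. Your extra direct finiteness check via the cloning POVM is a harmless addition to the paper's appeal to the bound \eqref{I<}.
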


\begin{proof} The proof of \eqref{I=Ir} is a very slight modification of what is done in \cite{BGT18}. Let us use the notation $G_3=O$ and $G_2=D_4$ for the two groups introduced in \ref{app:3orth} and \ref{app:2orth}; the actions of these two groups on the POVMs, as given in the two appendices, can be seen to satisfy the hypotheses of Theorem 9 of \cite{BGT18}, as done in \cite[Sections B.2, B.4]{BGT18}. We denote by $g\Mo$ the action of an element $g\in G_\mathtt{r}$ on the POVM $\Mo\in \Mscr(\Xscr^\mathtt{r})$ and by $\Mo_{G_\mathtt{r}}\in \Mscr(\Acal_\mathtt{r})$ the \emph{covariant version} of $\Mo$ as done in \cite[Sections 3.1, 4.1]{BGT18}. Thanks to the hypotheses on the group action of \cite[Theorem 9]{BGT18}, by substituting the sum of the relative entropies by their maximum, we get that the results on the \emph{entropic divergence} of Theorems 4 and 9 of \cite{BGT18} go into analogous results on the device information loss. In this way one proves that, for $\mathtt{r}=2,3$,
\[
\Delta_s[\Acal_\mathtt{r}\|g\Mo]=\Delta_s[\Acal_\mathtt{r}\|\Mo], \qquad \forall g\in G_\mathtt{r}, \qquad \forall \Mo\in \Mscr(\Xscr^\mathtt{r}),
\]
\[
\Delta_s\big[\Acal_\mathtt{r}\big\|\Mo_{G_\mathtt{r}}\big]\leq \Delta_s[\Acal_\mathtt{r}\|\Mo], \qquad \forall \Mo\in \Mscr(\Xscr^\mathtt{r}).
\]
As $\Mo_{G_\mathtt{r}}\in \Mscr(\Acal_\mathtt{r})$, by taking the infimum we get \eqref{I=Ir}.

To prove \eqref{A23infty}, note that,  by \eqref{infty_sub3} and \eqref{3_sub2}, the definition \eqref{def:Is2+3} gives the ordering among the three information losses $I_s[\Acal_\mathtt{r}\|\Mscr(\Acal_\mathtt{r})]$, $\mathtt{r}=2,3,\infty$. We already proved the last inequality in Theorem \ref{Iprop}, cf.\ the upper bound in \eqref{I<}. The proof of the strict positivity is analogous to the proof of the strict positivity in \eqref{I<}. Exactly as in the final part of the proof of Theorem \ref{Iprop} we obtain $0<c_{\rm inc}(\Xo,\Yo)\leq 2 I_s[\Acal_2\|\Mscr(\Acal_2)]$, where $c_{\rm inc}(\Xo,\Yo)$ is  defined in \cite[(10)]{BGT18}.
\end{proof}

\subsubsection{Entropic MURs.}\label{MUR2+3} By the definition and the strict positivity of the minimum information loss we get the state independent MURs in a formulation involving the device information loss:
\begin{equation}\label{rform1}
\Delta_s[\Acal_\mathtt{r}\|\Mo] \geq I_s[\Acal_\mathtt{r}\|\Mscr(\Acal_\mathtt{r})]>0, \qquad \forall \Mo\in \Mscr(\Xscr^\mathtt{r})\supset\Mscr(\Acal_\mathtt{r}).
\end{equation}
We have used \eqref{I=Ir} to extend the set of possible measurements $\Mo$. This form of MURs is the analogue of what is done in
Remark \ref{MURsfirst} for the case of infinitely many components.

By proving that the supremum over the states in \eqref{sie2+3} reduces to a maximum, we could get a MUR formulation analogous to  the one in Remark \ref{rem:genMUR}, but we skip this.

\subsubsection{Spin 1/2.}\label{sec:orth_s1/2} By using the state representation \eqref{state1/2} and the univariate measure \eqref{marg1/2}, we can compute the relative entropies, as done in equations \eqref{ent1/2} and \eqref{s(c)}. Then, by taking the supremum over the states, we get
\begin{equation} \label{Delta/k}
\Delta_{1/2}[\Acal_\mathtt{r}\|\Mo_c]=S\big(\Xo_i^{\rho_i}\big\|(\Mo_c)^{\rho_i}_{[i]}\big)=\log\frac 2{1+c}, \qquad \abs c \leq \frac 1{\sqrt \mathtt{r}}, \quad \mathtt{r}=2,3.
\end{equation}
Here, the measurement $\Mo_c$ is given by \eqref{3cl12} for $\mathtt{r}=3$ or by \eqref{2cl12} for $\mathtt{r}=2$, while the state $\rho_i$ is anyone of the two eigen-projections of $S_i$.

By the definition \eqref{def:Is2+3} and the explicit expression \eqref{Delta/k}, we obtain
\begin{equation}\label{I1/2_r}
I_{1/2}[\Acal_\mathtt{r}\|\Mscr(\Acal_\mathtt{r})]=\inf_{c\in [-1/\sqrt \mathtt{r}, 1/\sqrt\mathtt{r}]}S\big(\Xo_i^{\rho_i}\big\|(\Mo_c)^{\rho_i}_{[i]}\big) =\log \frac 2{1+1/\sqrt\mathtt{r}}\,, \qquad \mathtt{r}=2,3.
\end{equation}

Let us note that there is an optimal POVM, the one with $c=1/\sqrt \mathtt{r}$, the same of the one appearing in \cite{BGT18,Jost+19,HeiMZ16}, where different optimality criteria where used. By using this measurement it would be possible to give a state dependent version of the MURs as done in Remark \ref{stdep1/2}.

\subsubsection{The bounds from optimal cloning.}
For $s>1/2$ we can get a bound on the minimal information loss by using the POVM obtained from optimal cloning, because by construction we have
\begin{equation}\label{bound:cl}
I_{s}[\Acal_\mathtt{r}\|\Mscr(\Acal_\mathtt{r})]\leq \Delta_s[\Acal_\mathtt{r}\|\Mo_{\rm cl}^\mathtt{r}], \qquad \mathtt{r}=2,3.
\end{equation}

\paragraph{Three orthogonal components.} Let us set $p_m:=\Xo^\rho(m)$; then, by \eqref{eq:3cl} and \eqref{relentMon}, we get
\[ 
\Mo_{\rm cl[1]}^{3,\,\rho}(m)=\frac {1+\left(s+2\right)p_m}{3(s+1)},
\qquad S\big(\Xo^\rho\big\|\Mo_{\rm cl[1]}^{3,\,\rho}\big)=\sum_{m=-s}^s p_m\log \frac{3\left(s+1\right)p_m}{1+\left(s+2\right)p_m}.
\]
This gives the device information loss
\begin{equation}\label{Delta_3cl}
\Delta_s[\Acal_3\|\Mo_{\rm cl}^3]=\sup_\rho S\big(\Xo^\rho\big\|\Mo_{\rm cl[1]}^{3,\,\rho}\big)=\log \frac{3(s+1)}{s+3}.
\end{equation}

\paragraph{Two orthogonal spin components.} By the same definition of $p_m$ and using
\eqref{cl2:marg} instead of \eqref{eq:3cl}, in a similar way we get
\[ 
\Mo_{\rm cl[1]}^{2,\,\rho}(m)=\frac {1+\left(2s+3\right)p_m}{4(s+1)}, \qquad
S\big(\Xo^\rho\big\|\Mo_{\rm cl[1]}^{2,\,\rho}\big)=\sum_{m=-s}^s p_m\log \frac{4\left(s+1\right)p_m}{1+\left(2s+3\right)p_m},
\]
\begin{equation}\label{Delta_2cl}
\Delta_s[\Acal_2\|\Mo_{\rm cl}^2]=\sup_\rho S\big(\Xo^\rho\big\|\Mo_{\rm cl[1]}^{2,\,\rho}\big)=\log \frac{2(s+1)}{s+2}.
\end{equation}

Note that the device information losses \eqref{Delta_3cl} and \eqref{Delta_2cl} grow with $s$ and that they enjoy some unexpected relations, such as
\[
\Delta_1[\Acal_2\|\Mo_{\rm cl}^2]> \Delta_{1/2}[\Acal_3\|\Mo_{\rm cl}^3],
\qquad
\Delta_2[\Acal_2\|\Mo_{\rm cl}^2]= \Delta_{1}[\Acal_3\|\Mo_{\rm cl}^3],
\]
\[
\lim_{s\to +\infty}\Delta_s[\Acal_2\|\Mo_{\rm cl}^2]= \Delta_{3}[\Acal_3\|\Mo_{\rm cl}^3].
\]
For instance, the first relation says that, for the devices constructed by optimal cloning, the information loss for the case of two orthogonal components and $s=1$ is greater than the information loss for the case of three orthogonal components and $s=1/2$.

\subsection{Some orderings and bounds}\label{sec:order}
As we already said, the minimum information loss can be interpreted as a quantification of the incompatibility of the set of target observables. So, we can take the results obtained on $I_{s}[\Acal_\mathtt{r}\|\Mscr(\Acal_\mathtt{r})]$, \ $\mathtt{r}=2,3, \infty$, \ $s=1/2, 1,3/2,\ldots$,
to compare different sets of spin observables (even in different Hilbert spaces) from the point of view of incompatibility; as we shall see, some non intuitive relations appear.

First of all we have the inequalities  \eqref{<<<} in the case of all the components and small $s$; for the same $s$ and different $\mathtt{r}$ we have the inequalities \eqref{A23infty}.

By  the optimal cloning bound \eqref{bound:cl} and the growing with $s$ of the expressions \eqref{Delta_3cl} and \eqref{Delta_2cl}, we get the bounds
\begin{equation}\label{Ibounds}\begin{split}
I_{s}[\Acal_2\|\Mscr(\Acal_2)]\leq 1, &\qquad I_{s}[\Acal_3\|\Mscr(\Acal_3)]\leq \log 3, \qquad s\geq \frac 12\,,
\\
&I_{s}[\Acal_3\|\Mscr(\Acal_3)]\leq 1, \qquad \frac 12\leq s\leq 3.\end{split}
\end{equation}

By the bound \eqref{bound:cl} again, and the fact the we have the numerical value of  $I_{s}[\Acal_\infty\|\Mscr(\Acal_\infty)]$ for $s=1,\,3/2$, see equations \eqref{I1_tutte} and \eqref{Iall3/2}, we obtain
\begin{equation}\label{Iorder}\begin{split}
I_{s}[\Acal_2\|\Mscr(\Acal_2)]\leq I_{1}[\Acal_\infty\|\Mscr(\Acal_\infty)] <1, \qquad &1/2\leq s\leq 3,
\\
I_{s}[\Acal_3\|\Mscr(\Acal_3)]< I_{1}[\Acal_\infty\|\Mscr(\Acal_\infty)],\qquad  &s=1/2,\,1,
\\
I_{s}[\Acal_2\|\Mscr(\Acal_2)]< I_{3/2}[\Acal_\infty\|\Mscr(\Acal_\infty)],\qquad  &1/2\leq s\leq 11,
\\
I_{s}[\Acal_3\|\Mscr(\Acal_3)]< I_{3/2}[\Acal_\infty\|\Mscr(\Acal_\infty)],\qquad  &1/2\leq s\leq 2.\end{split}
\end{equation}
For instance, the second-last inequality says that two orthogonal components for $s=11$ are less incompatible than the set of all components for $s=3/2$; similar interpretations hold for the other inequalities.

\subsection{Noise and visibility} \label{N+v}

The marginals of the optimal measurements for spin $1/2$, \eqref{marg1/2} with $c=1/\sqrt\mathtt{r}$, $\mathtt{r}=2,3$, can be written in a way similar to \eqref{Mo-nv}: for $\mathtt{r}=2,3$,
\[ 
\Mo_{c[i]}(m)\big|_{c=1/\sqrt\mathtt{r}}= \eta^{\mathtt{r}}_{1/2}\Xo_i(m)+\left(1-\eta^{\mathtt{r}}_{1/2}\right)\Xo_i(-m),\qquad \eta^{\mathtt{r}}_{1/2}=\frac 12\left(1+\frac1{\sqrt\mathtt{r}}\right);
\]
the same holds for the marginals
\eqref{eq:3cl}, \eqref{cl2:marg} of the joint measurements generated by optimal cloning:
\[
\Mo_{{\rm cl}[i]}^{\mathtt{r}}(m)=\eta^{\mathtt{r}}_{{\rm cl},\, s}\Xo_{i}(m)+\left(1-\eta^{\mathtt{r}}_{{\rm cl},\, s}\right)\frac{\openone- \Xo_{i}(m)}{2s}, \qquad \eta^{\mathtt{r}}_{{\rm cl},\, s}=\frac{s+\mathtt{r}}{\mathtt{r}(s+1)}.
\]
Then, from  \eqref{I1/2_r}, \eqref{Delta_3cl}, \eqref{Delta_2cl}, we get
\begin{equation} \label{nr=il}
\Delta_s[\Acal_{\mathtt{r}}\|\Mo_{\rm cl}^{\mathtt{r}}]=\log \left(\eta^{\mathtt{r}}_{{\rm cl},\, s}\right)^{-1}, \qquad I_{1/2}[\Acal_\mathtt{r}\|\Mscr(\Acal_\mathtt{r})] =\log \left(\eta^{\mathtt{r}}_{1/2}\right)^{-1}, \quad \mathtt{r}=2,3.
\end{equation}

The visibilities above have been obtained by allowing for general noises, not only classical ones.
Inside the \emph{noise robustness} approach to incompatibility, the two visibilities $\eta^{\mathtt{r}}_{1/2}$ for spin 1/2 have already been obtained in \cite{DFK19}; they are in the class called \emph{incompatibility generalized robustness}, which means that general POVMs are allowed for noises.
By comparing with Section \ref{il+nv}, we can say that we have shown how to generalize this approach to the case of infinitely many observables, such as the spin vector. Moreover, by using information loss measures, we have shown how to link this problem with the one of uncertainty measures and MURs. Let us also stress that formulae like \eqref{nr=il} and \eqref{I+qII} hold in this particular cases; they have not a general validity. The case of non-orthogonal spin components \cite{BGT18,DFK19} could be a promising test to see the differences. In principle, our minimum information loss does not relay on the noisy versions of the target observables.

\section{Conclusions} \label{sec:concl}

The entropic formulation of MURs has the advantage of being well based on information theory (in particular on the notion of information loss) and independent of the measurement units of the observed physical quantities and from a reordering of their possible values \cite{BGT18,BGT17,BarG18}.
By using the case of the spin components, in this article we have shown that the approach based on the relative entropy can be extended so to treat on the same footing finitely or infinitely many observables and that a quantitative uncertainty bound can be constructed.

By introducing the worst information loss with respect to the target observables and the system states, we have defined the \emph{device information loss} in the various cases \eqref{sie}, \eqref{sie2+3}, \eqref{def:DeltaGEN}.
Then, by optimizing with respect to the approximating joint measurements we have defined the \emph{minimum information loss} \eqref{def:Is}, \eqref{def:Is2+3}, \eqref{def:IsGEN}. These two quantities allow for a clear formulation of state independent MURs, see Sections \ref{sec:MUR} and \ref{MUR2+3}.

To realize the minimum information loss one needs also to optimize the approximating measurement; an interesting point is that the ``best'' approximating measurement of a target spin observable is not necessarily a noisy version of the target, with classical noise, but most general noise structures can be involved, as discussed in Sections \ref{noise+C}, \ref{il+nv}, \ref{N+v}.

Moreover, the lower bound appearing in the state independent MURs, the minimum information loss, plays also the role of measure of incompatibility and allows to order different sets of target observables according to increasing incompatibility, as done in the inequalities \eqref{<<<}, \eqref{A23infty}, \eqref{Iorder}.

However, the computations of the two ``information losses'' need to solve difficult optimization problems and we have done these computations only for small values of $s$, Sections \ref{sec:ind12}, \ref{sec:ind1}, \ref{sec:ind32}, \ref{sec:orth_s1/2}. To compute the minimum information loss for other values of the spin also numerical computations should be surely involved.

Another open problem is the conjecture given after inequality \eqref{<<<}: is it true that the minimum information loss grows with $s$? For the cases of two and three orthogonal components we proved that the minimum information loss is upper  bounded by a value independent from $s$, see \eqref{Ibounds}. However, for the case of infinitely many components we proved only the existence of the upper bound \eqref{I<}, which grows with $s$; the problem of the asymptotic behaviour of $I_s[\Acal_\infty\|\Mscr(\Acal_\infty)]$ for large $s$ is open.

As we remarked at the end of Section \ref{sec:MUR}, the proof of MURs is independent of the choice of the class of approximating joint measurements. Anyway, the value of the minimum information loss can depend on this choice. Another open problem is to study if the lower bound remains $I_s[\Acal_\infty\|\Mscr(\Acal_\infty)] $ even with classes of measurements larger than $\Mscr_\infty$.
Indeed, one could consider post-processing procedures different from our, or even general POVMs on $\Xscr^{\mathbb S_2}$ that are not even constructed by post-processing of a POVM on $\mathbb S_2$. Our conjecture is that even these more general POVMs cannot give a lower information loss.

\appendix

\section{Spin $s$: rotations and $q$-coefficients}\label{app:symm}

Let us consider the rotation group in $\Rbb^3$: a counterclockwise rotation of the angle $\alpha$ around the unit vector $\bi u$ is denoted by
\begin{equation}\label{R_u}
R_{\bi u}(\alpha)\in SO(3), \qquad \abs{\bi u}=1, \quad \alpha\in [0,2\pi).
\end{equation}
Then, we introduce the unitary representation of $SO(3)$ on $\Hscr=\Cbb^{2s+1}$, given by
\begin{equation}\label{U(R)}
U\big(R_{\bi u}(\alpha) \big):=\exp\left\{-\rmi \alpha\, \bi u \cdot \bi S\right\}.
\end{equation}
Such a representation is an essential tool in our whole construction; this representation and its main properties can be found, e.g.,  in \cite[Sect.\ 3.5]{BieL81}, \cite[Sect.\ 3.11]{Hol11}.

By comparing equations \eqref{U(R)} and \eqref{Sphi}, we have the identification
\begin{equation}\label{V=UR}
V(\theta,\phi)=U\big(R_{\bi u(\phi)}(\theta) \big), \qquad
\bi u(\phi)=(-\sin\phi,\cos \phi,0)=\bi n(\pi/2,\phi+\pi/2);
\end{equation}
the unit vector $\bi n(\theta,\phi)$ is defined in \eqref{nTP}.
Moreover, the following decompositions hold:
\begin{equation}\label{Vdecomp}
U\big(R_{\bi n(\theta,\phi)}(\alpha)\big)=V(\theta,\phi)U(R_{\bi k}(\alpha))V(\theta,\phi)^\dagger, \qquad V(\theta,\phi)=\rme^{-\rmi\phi S_z}\rme^{-\rmi \theta S_y}\rme^{\rmi\phi S_z}.
\end{equation}

\subsection{Properties of the Wigner small-$d$-matrix}\label{W-d-m}
An explicit, but complicated, form of the Wigner small-$d$-matrix \eqref{Wmatrix} has been obtained \cite[(3.65)]{BieL81}; in particular, the explicit expressions for $s=1/2,\, 1 , \, 3/2, \, 2$ can be found in \cite[Fig.\ 44.1]{44.1}\footnote{The table can be downloaded from http://pdg.lbl.gov/2019/reviews/rpp2018-rev-clebsch-gordan-coefs.pdf}. From \cite[(3.65)]{BieL81} one sees that the form of the matrix elements is sufficiently simple when one of the indices takes the maximal value and one gets
\begin{equation}\label{dssm}
\abs{d^{(s)}_{s,m}(\theta)}^2=\frac{(2s)!}{(s+m)!\,(s-m)!}\left(\frac {1+x}2\right)^{s+m}\left(\frac {1-x}2\right)^{s- m}, \qquad x=\cos\theta;
\end{equation}
we reported only the square modulus, because we need only this, see \eqref{q+d^2}.

In general, the quantity $\abs{d^{(s)}_{\ell,m}(\theta)}^2$ is a polynomial in $\cos \theta$, as one sees from \cite[(3.72)]{BieL81}. Directly from the definition \eqref{Wmatrix} we have also
\begin{equation}\label{pi-m}
\abs{d^{(s)}_{\ell,m}(\theta)}^2=\abs{d^{(s)}_{-m,\ell}(\pi-\theta)}^2.
\end{equation}
The Wigner matrix turns out to be real and the following properties hold \cite[(3.80)-(3.82), (3.125)-(3.126)]{BieL81}:
\begin{equation}\label{dsymm}
d^{(s)}_{m',m}(\theta)=(-1)^{m-m'}d^{(s)}_{m,m'}(\theta)=d^{(s)}_{-m,-m'}(\theta),
\end{equation}
\begin{equation}\label{dsum}
\sum_{m=-s}^s d^{(s)}_{m_1,m}(\theta)d^{(s)}_{m_2,m}(\theta)=\sum_{m=-s}^s d^{(s)}_{m,m_1}(\theta)d^{(s)}_{m,m_2}(\theta)=\delta_{m_1,m_2}.
\end{equation}

\subsection{The $q$-coefficients}\label{app:qcoeff}
By using the expressions given in \cite[Fig.\ 44.1]{44.1} we can compute the $q$-coefficients in the cases $s=1/2,\,1,\,3/2$.

\subsubsection{Spin 1/2.}\label{app:1/2}
In this case we have $\abs{d_{\ell,h}^{(1/2)}(\theta)}^2=\frac 12 +2h\ell\cos\theta$. Then, from the definition \eqref{def:q} we obtain
\begin{equation}\label{q1/2}
q(m|\ell,h)=\frac 12 +2 \ell h m;
\end{equation}
we suppressed the index $\btheta$, because there is no arbitrariness in these indices, as recalled in Section \ref{joint1/2}. By \eqref{discr_s1}, \eqref{margq2}, we get \eqref{Mk1/2}.

\subsubsection{Spin 1.}\label{app:s=1}
In this case we have
\[
\abs{d_{0,0}^{(1)}(\theta)}^2=x^2, \qquad \abs{d_{0,\pm 1}^{(1)}(\theta)}^2=\abs{d_{\pm 1,0}^{(1)}(\theta)}^2=\frac{1-x^2}2,
\]
\[
\abs{d_{\pm 1,1}^{(1)}(\theta)}^2=\abs{d_{\mp 1,-1}^{(1)}(\theta)}^2=\frac{(1\pm x)^2}4, \qquad x:=\cos\theta.
\]
From \eqref{def:q}, by direct computations, we get the explicit expressions of the $q$-coefficients, with $a$ given in \eqref{s1_a}; by using this parameter as index, instead of $\btheta$, we have
\begin{equation}\label{q-1}\begin{split}
\quad &q_a(\pm1|1,\pm1)=q_a(\pm1|-1,\mp1)=1-\frac{(1+a)^3}8,
\\
&q_a(\mp1|1,\pm1)=q_a(\mp1|-1,\mp1)=\frac{(1-a)^3}8,
\\
&q_a(1|0,\pm1)=q_a(1|\pm1,0)=q_a(-1|0,\pm1)=q_a(-1|\pm1,0)=\frac{2+a}4\,(1-a)^2,
\\
&q_a(\pm 1|0,0)=\frac {1-a^3}2, \qquad q_a(0|0,\pm1)=q_a(0|\pm1,0)=\frac a2 \left(3-a^2\right),
\\
&q_a(0|0,0)=a^3,  \qquad \qquad q_a(0|1,\pm1)=q_a(0|-1,\mp1)=\frac a4\left(3+a^2\right). \end{split}
\end{equation}

\subsubsection{Spin 3/2.}\label{app:spin3/2}
In this case we have, with $x=\cos \theta$,
\[
\abs{d_{\pm 3/2,3/2}^{(3/2)}(\theta)}^2=\abs{d_{\mp 3/2,-3/2}^{(3/2)}(\theta)}^2=\frac {(1\pm x)^2}8,
\]
\[ 
\abs{d_{\pm 3/2,1/2}^{(3/2)}(\theta)}^2=\abs{d_{\mp 3/2,-1/2}^{(3/2)}(\theta)}^2=\abs{d_{1/2,\pm 3/2}^{(3/2)}(\theta)}^2= \abs{d_{-1/2,\mp 3/2}^{(3/2)}(\theta)}^2=\frac {3}8\left(1\pm x\right)(1-x^2),
\]
\[
\abs{d_{\pm 1/2,1/2}^{(3/2)}(\theta)}^2=\abs{d_{\mp 1/2,-1/2}^{(3/2)}(\theta)}^2=\frac {1\pm x}8\left(3x\mp 1\right)^2.
\]
From \eqref{def:q}, by direct computations, we get the explicit expressions of the $q$-coefficients, with $a$ given in Section \ref{joint3/2}; by using this parameter as index, instead of $\btheta$, we have
\begin{equation}\label{q3/2a}\begin{split}
&q_a(\pm 3/2|\pm 3/2,3/2)=\frac 1{16}\left(15-4a- 6a^2-4a^3- a^4\right),
\\
&q_a(\pm 3/2|\pm 3/2,-3/2)=\frac 1{16}\left(1-4a+6a^2-4a^3+ a^4\right),
\\
&q_a(\pm 3/2|\pm 3/2,1/2)=\frac 1{16}\left(11-12a- 6a^2+4a^3+ 3a^4\right),
\\
&q_a(\pm 3/2|\pm 3/2,-1/2)=\frac 1{16}\left(5-12a+ 6a^2+4a^3-3 a^4\right),\end{split}
\end{equation}
\begin{equation}\label{q3/2b}\begin{split}
&q_a(\pm 3/2|\pm1/2,1/2)=\frac 1{16}\left(7-4a+ 10a^2-4a^3-9 a^4\right),
\\
&q_a(\pm3/2|\pm1/2,-1/2)=\frac 1{16}\left(9-4a- 10a^2-4a^3+9 a^4\right),\end{split}
\end{equation}
\begin{equation}\label{q3/2c}\begin{split}
&q_a(\pm1/2|\pm3/2,3/2)=\frac a{16}\left(4+6a+ 4a^2+a^3\right),
\\
&q_a(\pm1/2|\pm3/2,-3/2)=\frac a{16}\left(4-6a+4a^2-a^3\right),
\\
&q_a(\pm1/2|\pm3/2,1/2)=\frac a{16}\left(12+6a- 4a^2-3a^3\right),
\\
&q_a(\pm1/2|\pm3/2,-1/2)=\frac a{16}\left(12-6a-4a^2+3a^3\right),\end{split}
\end{equation}
\begin{equation}\label{q3/2d}\begin{split}
&q_a(\pm1/2|\pm1/2,1/2)=\frac a{16}\left(4-10a+ 4a^2+9a^3\right),
\\
&q_a(\pm1/2|\pm1/2,-1/2)=\frac a{16}\left(4+10a+4a^2-9a^3\right);\end{split}
\end{equation}
the other coefficients are obtained by the symmetry properties \eqref{qsymm}.

\section{Orthogonal components}\label{app:orthcomp}

\subsection{Three orthogonal components}\label{app:3orth}
The set of  the three orthogonal spin components $\Acal_3=\{\Xo, \Yo, \Zo\}$ is invariant under the action of the order $24$ octahedron group $O\subset SO(3)$ \cite[Appendix B.4]{BGT18}, generated by the
$90^\circ$ rotations around the three coordinate axes: $S_{O}=
\{R_{\bi i}(\pi/2),\,R_{\bi  j}(\pi/2),\,R_{\bi  k}(\pi/2)\}$. Let us denote the three generators of $O$ by $g_1=R_{\bi
i}(\pi/2)$, $g_2=R_{\bi j}(\pi/2)$, $g_3=R_{\bi  k}(\pi/2)$; then we have the covariance relations
\begin{equation}\begin{split}
U_{g_1}
\Xo(x) U_{g_1}^\dagger = \Xo(x), \qquad
&U_{g_1}\Yo(y) U_{g_1}^\dagger  = \Zo(y), \qquad \ \ \
U_{g_1} \Zo(z) U_{g_1}^\dagger= \Yo(-z),
\\
U_{g_2}\Xo(x)U_{g_2}^\dagger  = \Zo(-x), \qquad
&U_{g_2} \Yo(y) U_{g_2}^\dagger = \Yo(y), \qquad \ \ \ U_{g_2}\Zo(z)U_{g_2}^\dagger  = \Xo(z),
\\
U_{g_3} \Xo(x) U_{g_3}^\dagger= \Yo(x), \qquad  &U_{g_3}  \Yo(y) U_{g_3}^\dagger = \Xo(-y),
\qquad  U_{g_3} \Zo(z)U_{g_3}^\dagger = \Zo(z).
\end{split}\end{equation}
Then, $\Mo\in \Mscr(\Acal_3)$ is a POVM on $\Xscr^3$ with the same covariance properties:
\begin{equation}\label{covMo3}\begin{split}
U_{g_1} \Mo(x,y,z) U_{g_1}^\dagger&= \Mo(x,-z,y),
\qquad
U_{g_2} \Mo(x,y,z) U_{g_2}^\dagger= \Mo(z,y,-x),
\\
&U_{g_3} \Mo(x,y,z) U_{g_3}^\dagger= \Mo(-y,x,z).\end{split}
\end{equation}

\subsection{Two orthogonal components} \label{app:2orth}
Here the set of target observables is $\Acal_2=\{\Xo,\Yo\}$. Their symmetry group is the dihedral group $D_4\subset SO(3)$, the order $8$
group of the $90^\circ$ rotations around the $\bi  k$-axis, together with the
$180^\circ$ rotations around $\bi  i$, $\bi  j$, $
\bi  n_1:=\bi n(\pi/2,\,\pi/4)$, and $\bi n_2:= \bi n(\pi/2,\, 3\pi/4)$.
Note that $D_4\subset O$.  The two rotations $S_{D_4}=
\{R_{\bi i}(\pi),R_{\bi  n_1}(\pi)\}$ generate $D_4$, as  we have
\[
R_{\bi j}(\pi)=R_{\bi  n_1}(\pi)R_{\bi  i}(\pi)R_{\bi  n_1}(\pi), \qquad R_{\bi
n_2}(\pi)=R_{\bi  i}(\pi)R_{\bi  n_1}(\pi)R_{\bi  i}(\pi),
\]
\[
R_{\bi k}(\pi/2)=R_{\bi  n_2}(\pi)R_{\bi  j}(\pi).
\]

As discussed in \cite[Appendix B.2]{BGT18}, the covariance relations are: $\forall(x,y)\in\Xscr^2$,
\begin{equation}\begin{split}
U\big(R_{\bi  i}(\pi)\big) \Xo(x) U\big(R_{\bi  i}(\pi)\big)^\dagger
= \Xo(x),
\qquad  &U\big(R_{\bi  i}(\pi)\big) \Yo(y) U\big(R_{\bi  i}(\pi)\big)^\dagger
= \Yo(-y),
\\
U\big(R_{\bi  n_1}(\pi)\big) \Xo(x) U\big(R_{\bi  n_1}(\pi)\big)^\dagger
= \Yo(x),
\qquad  &U\big(R_{\bi  n_1}(\pi)\big) \Yo(y) U\big(R_{\bi n_1}(\pi)\big)^\dagger
= \Xo(y).\end{split}
\end{equation}
Then, $\Mo\in \Mscr(\Acal_2)$ is a POVM on $\Xscr^2$ with the same covariance properties:
\begin{equation}\label{covMo2}\begin{split}
&U\big(R_{\bi  i}(\pi)\big) \Mo(x,y) U\big(R_{\bi  i}(\pi)\big)^\dagger
= \Mo(x,-y),
\\
&U\big(R_{\bi  n_1}(\pi)\big) \Mo(x,y) U\big(R_{\bi  n_1}(\pi)\big)^\dagger
= \Mo(y,x).\end{split}
\end{equation}

\subsection{Joint measurements from optimal cloning}\label{app:clon}
A technique to construct good multi-observables approximating a set of incompatible observables is based on optimal cloning \cite{Wer98,KW99,HSTZ14}; we already applied it to the context of MURs in \cite{BGT18}. Let us consider a system with Hilbert space $\Hscr$, of dimension $\dim(\Hscr)=d$, and let $\Scal(\Hscr)$ denote its state space; then,
the optimal approximate $\mathtt{r}$-cloning channel is the map
\[
\Phi : \Scal(\Hscr)\to\Scal(\Hscr^{\otimes \mathtt{r}}), \qquad \Phi(\rho) = \frac{d!\mathtt{r}!}{(d+\mathtt{r}-1)!}\,
\Pi_\mathtt{r} (\rho\otimes\openone^{\otimes (\mathtt{r}-1)}) \Pi_\mathtt{r} ,
\]
where $\Pi_\mathtt{r}$ is the orthogonal projection of $\Hscr^{\otimes \mathtt{r}}$ onto its
symmetric subspace ${\rm Sym}(\Hscr^{\otimes \mathtt{r}})$ \cite{KW99}.
Let $\{\Ao_1,\ldots, \Ao_\mathtt{r}\}$ be a set of observables, possibly incompatible; then, by using the adjoint channel we get the reasonably approximate multi-observable  $\Mo_{\rm cl} =
\Phi^*(\Ao_1\otimes\cdots\otimes\Ao_\mathtt{r})$, whose marginals are given by \cite{Wer98}
\begin{equation}\label{eq:maropt}
\Mo_{{\rm cl}[h]}(x)= \lambda_{d,\mathtt{r}}\Ao_h(x)+ \left(1-\lambda_{d,\mathtt{r}}\right)\frac\openone {d},\qquad \lambda_{d,\mathtt{r}}=\frac{d+\mathtt{r}}{\mathtt{r}\left(d+1\right)}.
\end{equation}
The multi-observable  $\Mo_{\rm cl}$ turns out to have the same symmetry properties of the set of observables $\{\Ao_1,\ldots, \Ao_\mathtt{r}\}$. Indeed, let $U$ be a unitary operator on $\Hscr$; by using the commutation property $U^{\otimes \mathtt{r}}\Pi_\mathtt{r}=\Pi_\mathtt{r}U^{\otimes \mathtt{r}}$, it is possible to prove the transformation rule
\[
U\Mo_{\rm cl}(x_1,\ldots,x_\mathtt{r})U^\dagger =\Phi^*\big(U\Ao_1(x_1)U^\dagger\otimes\cdots\otimes U\Ao_\mathtt{r}(x_\mathtt{r})U^\dagger\big).
\]

We shall use this construction for 2 or 3 orthogonal spin components; so, we have $d=2s+1$ and $\mathtt{r}=2,3$. The property above implies immediately that
$\Phi^*(\Xo,\Yo,\Zo)$ satisfies the covariance properties \eqref{covMo3} and $\Phi^*(\Xo,\Yo)$ the covariance properties \eqref{covMo2}.

\section*{References}

\end{document}